\title[AAMAS-2024 Formatting Instructions]{Atlas-X Equity Financing: Unlocking New Methods to Securely Obfuscate Axe Inventory Data Based on Differential Privacy}
\author{Antigoni Polychroniadou}
\affiliation{
  \institution{J.P. Morgan AI Research, AlgoCRYPT CoE }
  \city{New York}
  \country{USA}}
\email{antigoni.polychroniadou@jpmorgan.com}
\author{Gabriele Cipriani}
\affiliation{
  \institution{J.P. Morgan Quantitative Research}
  \city{London}
  \country{United Kingdom}}
\email{gabriele.cipriani@jpmorgan.com}
\author{Richard Hua}
\affiliation{
  \institution{J.P. Morgan Quantitative Research}
  \city{New York}
  \country{USA}}
\email{richard.hua@jpmorgan.com}
\author{Tucker Balch}
\affiliation{
  \institution{J.P. Morgan AI Research}
  \city{New York}
  \country{USA}}
\email{tucker.balch@jpmchase.com}
\begin{abstract}
Banks publish daily a list of available securities/assets (axe list) to selected clients to help them effectively locate Long (buy) or Short (sell) trades at reduced financing rates. This reduces costs for the bank, as the list aggregates the bank's internal firm inventory per asset for all clients of long as well as short trades. However, this is somewhat problematic: (1) the bank's inventory is revealed; (2) trades of clients who contribute to the aggregated list, particularly those deemed large, are revealed to other clients. Clients conducting sizable trades with the bank and possessing a portion of the aggregated asset exceeding $50\%$ are considered to be concentrated clients. This could potentially reveal a trading concentrated client's activity to their competitors, thus providing an unfair advantage over the market.

Atlas-X Axe Obfuscation, powered by new differential private methods, enables a bank to obfuscate its published axe list on a daily basis while under continual observation, thus maintaining an acceptable inventory Profit and Loss (P\&L) cost pertaining to the noisy obfuscated axe list while reducing the clients' trading activity leakage. Our main differential private innovation is a differential private aggregator for streams (time series data) of both positive and negative integers under continual observation.

For the last two years, Atlas-X system has been live in production across three major regions—USA, Europe, and Asia—at J.P. Morgan, a major financial institution, facilitating significant profitability. To our knowledge, it is the first differential privacy solution to be deployed in the financial sector. We also report benchmarks of our algorithm based on (anonymous) real and synthetic data to showcase the quality of our obfuscation and its success in production.
\end{abstract}
\keywords{Differential privacy;
Time-series data; Markets; privacy under continuous observation}
\newcommand{\BibTeX}{\rm B\kern-.05em{\sc i\kern-.025em b}\kern-.08em\TeX}
\def\BibTeX{{\rm B\kern-.05em{\sc i\kern-.025em b}\kern-.08em
    T\kern-.1667em\lower.7ex\hbox{E}\kern-.125emX}}
\DeclareMathOperator{\sgn}{sgn}
\newcommand{\Mod}[1]{\ (\mathrm{mod}\ #1)}
\newtheorem{theorem}{Theorem}[section]
\newtheorem{definition}[theorem]{Definition}
\newtheorem{corollary}[theorem]{Corollary}
\gdef\@copyrightpermission{
	\begin{minipage}{0.3\columnwidth}
		\href{https://creativecommons.org/licenses/by/4.0/}{\includegraphics[width=0.90\textwidth]{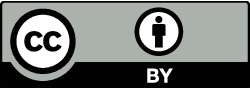}}
	\end{minipage}\hfill
	\begin{minipage}{0.7\columnwidth}
		\href{https://creativecommons.org/licenses/by/4.0/}{This work is licensed under a Creative Commons Attribution International 4.0 License.}
	\end{minipage}
	\vspace{5pt}
}
\begin{document}


\pagestyle{fancy}
\fancyhead{}


\maketitle 


\section{Introduction}

An axe is an interest in a particular security that a firm is looking to buy or sell\footnote{The term comes from the jargon: ``having an axe to grind''.}. In general, a firm providing an axe to external counterparties has a strong interest in keeping such information private as it provides an indication of the direction (buy or sell) they want to trade a particular security. If other market participants are informed of how a particular firm is axed in a given security, they can extract precious information on the firm's trading strategy and, perhaps, could even drive the price of the security in a more disadvantageous direction before the firm can transact. 

Large broker-dealer banks, including J.P. Morgan, distribute aggregated axe lists to clients (called hedge funds) with the aim of reducing the costs of running their activities. The axe list is shared electronically (via email or other means) and, most importantly, is common to all clients receiving it. It consists of a list of tuples $({\sf symb}, {\sf direction},{\sf quantity})$ where $\sf symb$ is the symbol of the security/asset to buy or sell based on the ${\sf direction}$ and $\sf quantity$ is the number of the shares (positions) of the security to trade. The universe of assets covered by the axe list is rather large, encompassing thousands of securities listed in major markets. For a given asset, the bank's axe is given by the aggregation of the positions held by the bank. Importantly, when the traded positions of a large (``concentrated'') client contribute the most to the axe quantity, the published axe reveals the trading activity of the client. This is particularly problematic because hedge funds' trading strategies are confidential and their disclosure can undermine the funds' performance. Sensitive trading moves reconstruction is feasible due to various factors, including side information. Clients report positions exceeding specific thresholds to regulatory bodies like the Fed, providing a trail for piecing together trading patterns. Informal conversations in financial circles can also divulge valuable insights. This scenario poses the risk of someone replicating trading moves by observing aggregations, impacting the original strategy's efficiency and market dynamics.

The problem we address in this work is how to minimize the adverse effects of the information leakage caused by sharing the axe list with clients.
Such undesirable consequences are important both from a reputational point of view, with the bank losing clients which don't want their trading decision made public, as well as a risk management / financial one when the information contained in the axe list is used to trade against the bank itself. Such problems are also exacerbated by the fact that the axe list is published daily, with clients having access to the full history of the published axe lists.

Historically, banks have been using some ad-hoc methods to mitigate the leakage. For instance, they might aggregate several stocks together into buckets (e.g., reveal only range of available stocks to trade in some sector), or trim the positions of other stocks. This does not guarantee privacy and does not provide a useful axe list with good utility (in the case, the profit for the bank). In some cases the clients’ positions are removed from the inventory to eliminate the leakage at the expense of poor utility for the bank and the inability of the bank to offer reduced rates to clients. 

Our approach is instead more robust, based on a new differential private aggregator for data streams under continual observation. We also introduce precise measures to quantify the utility of the published axe (in terms of profits for the bank) as well as the quality of the obfuscation.


\section{Problem Statement}

In this paper, we investigate an intriguing question related to the secure sharing of aggregated time-series data (e.g. axe inventory trades) on a daily basis of all clients, while preserving the privacy of any changes in the direction (buy or sell) of the data/trades of contributing concentrated clients’ trading activity.\\

\emph{
How can a bank maintain the continuous release of updated aggregate time-series market data while preserving each individual client’s privacy?}\\

As illustrated in Figure \ref{DPh}, by comparing two different axe lists - one that includes the concentrated clients' positions with the bank and one that does not - we would like to obscure whether a single concentrated client is buying or selling an asset on any given day. If the direction is revealed, then the client's activity is exposed, as they are guiding the direction of the asset in question. As depicted in Figure~\ref{axeexample}, a concentrated client (represented by the green line) holds the majority of positions on the true axe list (represented by the blue line), dictating daily movements. Our objective is to obscure the true axe (represented by the orange line) by concealing the directional activities of the concentrated client, whether it involves buying or selling (increasing or decreasing quantity/ positions). If all the clients follow the same direction as the concentrated client, then this is normal behavior that we should not attempt to hide since the client's behavior is not particularly distinctive and follows the crowd. We would like to maintain privacy with respect to a utility constraint. Jumping ahead, the utility is determined by the profits or losses the bank can incur by obfuscating the direction. 

\begin{figure}
  \centering
  \includegraphics[width=0.5\textwidth]{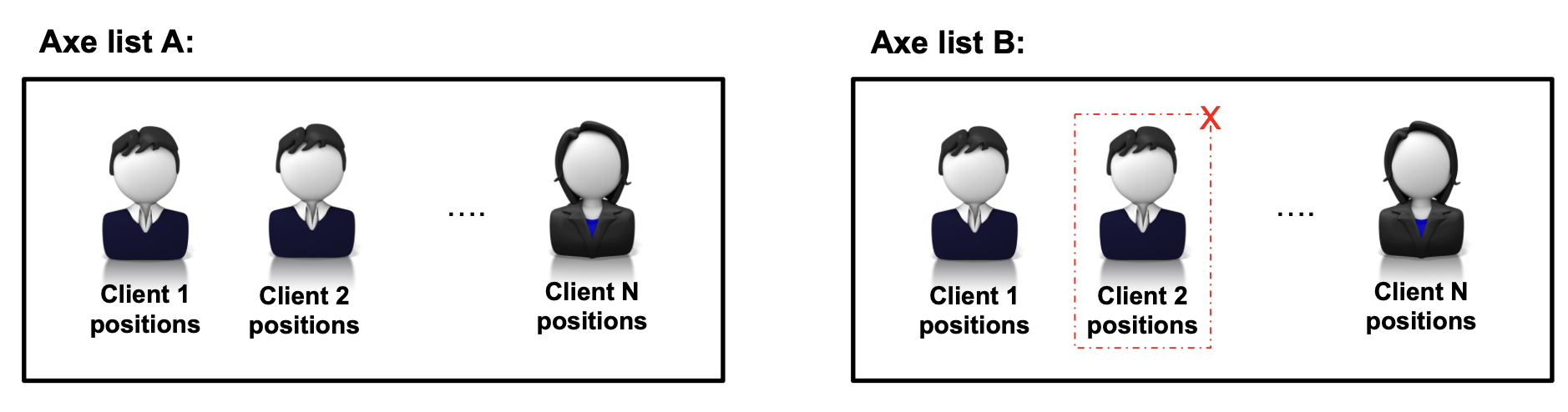}
  \caption{The daily direction (Buy or Sell) of two axe lists that differ only in the positions of a single concentrated client should be statistically indistinguishable.}\label{DPh}
\end{figure}

\begin{figure}
  \centering
  \includegraphics[width=0.45\textwidth]{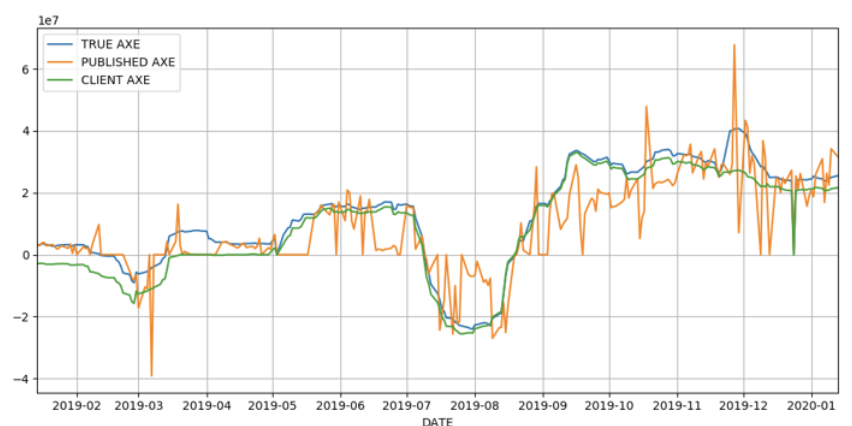}
  \caption{ Example of an obfuscated published axe (in orange color) for a given asset, together with the historical data for the bank's true axe (in blue color) and the positions of a highly concentrated client (in green color). The Y-axis refers to the axe quantity while the X-axis the observation date.}\label{axeexample}
\end{figure}

This problem calls for differential privacy. The notion of differential privacy was proposed by Dwork et al.~\cite{dwork2006our}. Since then, there is an extensive work in the literature studying the different tradeoffs between utility and privacy. However, the differentially private setting we consider is different from the traditional setting which assumes a static input database, and a third party that needs to publish some obfuscated/sanitized aggregate statistics of the database once. In our use case, the database is dynamic and changes every day. The differential private  mechanism needs to update the published aggregate statistics on a daily basis. Traditional differential private mechanisms can lead to a significant loss in terms of utility or privacy. 

In the context of differential privacy we address the following generic question: \\

\emph{
How can data aggregators continuously release updated aggregate statistics while preserving each individual user’s privacy and without degrading the utility of the data? 
}\\

The utility measure that we have identified and consider in this work is the cost of keeping positions on the bank's inventory. We describe the P$\&$L as the cost of financing a position on a given asset, due to the need of raising the cash to buy it or acquiring the asset to sell it in the market.


The work of~\cite{DworkNPR10} proposes algorithms for differential privacy under continual observation but only for the minimum functionality of counting binary ($0$ or $1$) values at each time-step. The counter statistic is a basic primitive in numerous data streaming algorithms. In this work, we are interested in more complex statistics released under continual observation and explore the utility of the use case at  hand.  

In this section, for readability reasons we have abstracted and generalized the problem statement based on the use case of axe inventory. For an in-depth connection of the problem statement to the axe inventory use case, please refer to Appendix~\ref{FinancingActivities}, wherein we have also defined all the financial terms related to the problem.

\subsection{Our Contributions} 

Our contributions can be succinctly summarized as follows:

\begin{itemize}
   \item {\bf Real-World Use-Case Identification}: We pinpoint a real-world scenario that underscores the potential benefits of DP.  Atlas-X is the first DP solution under CO running live in the financial arena.
    \item {\bf New algorithm for DP under continual observation (CO)}: We introduce a new algorithm on the harder setting of DP with CO;  DP under CO presents greater challenges than traditional DP due to the persistent data access. While traditional DP focuses on static datasets. Key Challenges:
    \begin{enumerate}
        \item Accumulative Privacy Risk: With ongoing data addition and observations, the cumulative privacy risk grows. New data points incrementally reveal more about individuals, raising the risk of reidentification and sensitive information exposure.
        \item Cumulative Knowledge: In CO, adversaries might exploit the accumulated knowledge gained from earlier queries to deduce more sensitive information.
    \end{enumerate}              
    These challenges require more advanced techniques and strategies to ensure long-term privacy preservation while enabling valuable insights from the continuously-observed data.
\item {\bf Analysis $\&$ Implementation:} Identified the metrics/risks studied for this use case together with the business and provided an analysis based on real production data. Furthermore, we report benchmarks of the Atlas-X system, which runs in production. The system offers increased opportunities for clients to locate trades at advantageous prices as well as achieving better profits for the bank. Atlas-X has also proven to be useful in retaining existing clients of the bank, as we are able to prove that information about their trading activity is effectively safeguarded.

\item {\bf Integration with Trading Platform:} Our system seamlessly integrates into an existing trading platform of J.P. Morgan, further validating its practical utility. 

\end{itemize}

Next we provide more details on our new algorithm and the novel use case.

\noindent{\bf New Algorithm:} We show how to address the above questions using differential privacy techniques. We propose a differentially private continual aggregator that outputs at every time step the approximate updated aggregator. We can achieve a construction that has error that is only poly-log in the number of time steps. Assume that the input stream $\sigma \in \mathbb{Z}$ is a sequence of positive and negative integers. The integer $\sigma(t)$ at time $t \in \mathbb{N}$ may denote whether positions/shares in a stock increased or decreased at time $t$, e.g., whether a client bought or sold $\sigma$ shares of a stock at time $t$. The mechanism must output an approximate aggregator of the sum of all positive and negative integers seen so far until timestep $t$. We propose an $\epsilon$-differentially private continual aggregator with small error. For each $t \in \mathbb{N}$ we guarantee $O(\frac{ T^{1/4}\sqrt{\Delta}}{\epsilon})$ error with global sensitivity $\Delta$. See Theorem~\ref{them} for our formal statement and Theorem~\ref{them2}. Prior works~\cite{DworkNPR10,ChanSS11} have considered simpler statistics under continual observation as well as  simpler utility considerations.

\noindent{\bf Use Case:} We have identified a real-world problem for differential privacy under continual observation on a large dataset in which the privacy of the previous axe inventory publication can be significantly enhanced. We propose a new privacy preserving algorithm that generates a noisy axe list while protecting clients’ privacy and maintaining the desired profit for the bank (P$\&$L). Differential privacy is a statistical learning tool that enables us to add carefully computed mathematical noise to the axe list. The noise term is large enough to obfuscate individual client positions and small enough to achieve the desired P$\&$L. Our new algorithm is robust with provable guarantees of privacy. To estimate the effectiveness of our method, we have also defined the utility associated with the axe publication as well as measures of the quality of the obfuscation. The model parameters have then been derived by employing these findings, as displayed in Section \ref{sec:exp} for further detail.

\subsection{Related work}

The works of \cite{DworkNPR10,ChanSS11} introduced the concept of differential privacy under continual observation and constructed differentially private continual counters of streams of $0$'s and $1$'s. Their binary mechanisms are used in the context of the orthogonal problem of privacy-preserving federated learning~\cite{BonawitzIKMMPRS17} with the most recent ones being~\cite{DBLP:conf/icml/KairouzM00TX21} and it represents a separate context from our primary use case where differential privacy under continuous observation is used. Privacy-preserving federated learning is a distributed machine learning approach that allows multiple parties to collaboratively train a shared model while keeping their data private. It is an emerging technology that is gaining popularity due to its ability to protect data privacy and reduce data movement while allowing multiple parties to train a model with their own data. To ensure differential privacy, federated learning employs various clipping mechanisms~\cite{DBLP:conf/ccs/AbadiCGMMT016,DBLP:conf/iclr/McMahanRT018,DBLP:conf/nips/AndrewTMR21} too. The latest advancements in privacy-preserving federated learning~\cite{GuoPSBB22,acorn,LiLPT23,MaWAPR23} based on secure multiparty computation (MPC) provide enhanced security measures by employing masked or encrypted training gradients.

Prime Match~\cite{PolychroniadouA23} from J.P. Morgan, based on MPC, significantly enhances security and privacy. In Prime Match, buy and sell orders of clients and the bank are encrypted for matching, with orders only being revealed if a match occurs. Unlike Atlas-X based on differential privacy, which hides specific axe dataset/order properties without requiring client participation in MPC, Prime Match ensures no information leakage unless a match is found.

\subsection{Technical Overview} 

\noindent{\bf Problem Statement.} Our goal is to achieve a differential private mechanism for aggregation under continual observation. A mechanism is differentially private if it cannot be used to distinguish two streams that are almost the same. In other words, an attacker cannot tell whether an event of interest took place or not by looking the output of the mechanism over time. For example, the adversary is unable to determine whether a concentrated clients’s positions are included in the inventory axe list at some time $t$. 

We abstract the problem as follows: we consider streams of positive and negative numbers. Let $\sigma(t)$ be an item in the stream at time $t \in \mathbb{N}$ which is either a positive or negative integer. At every time $t$, we wish to output the sum of numbers $\alpha(t)=\sum_{i}^t \sigma(t)$, the aggregator, that have arrived up to time $t$ from $i=1$. 

\noindent{\bf Naive mechanism} In this mechanism at every time step $t$, the mechanism answers with a new sum, and randomizes the answer with fresh independent noise i.e. $\alpha(t)+ noise$ where $\alpha(t)$ is the true aggregator at timestep $t$. The drawback is that the privacy loss grows linearly with respect to the number of
queries, which is $T$ in our setting. $T$ is an upper bound on time. 

\noindent{\bf Simple mechanism} Another approach is to add independent noise to each item of the stream, i.e. $\sigma(t)+ noise_t$ and the mechanism outputs $\sum_{i\leq t} (\sigma(t)+ noise_t)$ at time $t$. In this case the privacy loss depends on $\sqrt{T}$. 

\noindent{\bf Window mechanism} In this mechanism we want to publish noisy versions of some partial sums as new items
arrive. Given the partial sums, an observer computes an estimate for the aggregator at each time step by summing up an appropriate selection of partial sums. For instance, in the naive mechanism, $\alpha(t)$ can be seen as a sum of noisy partial sums where each item $\sigma(t)$ appears in $O(T)$ of these partial sums and this is why the privacy loss is linear in $T$. In particular, when an item is flipped in the incoming stream, O(T) of the partial sums will be affected. In the simple mechanism,  the published partial sums are noisy versions of each item. That said, each item appears in only one partial sum but each aggregator is the sum of $O(T)$ partial sums. 

To guarantee small privacy loss, we would like to have each item appear in a small number of partial sums. Moreover, to achieve smaller error we want each aggregator to be a sum of a small number of partial sums since the noises add up as we sum up several noisy partial sums.  Inspired by the work of~\cite{DworkNPR10} who consider a counter mechanism for counting an incoming stream of only $0$s and $1$s, we group consecutive items contiguous windows of size $B$. Then the idea is that within a window, we apply the simple mechanism from above. Then, treating each window as a single item we apply again the simple mechanism. More details of our algorithm are given in Algorithm 1. Jumping ahead, each $\alpha(t)$ and $\beta(d(t))$ is a noisy partial sum and one can reconstruct the approximate aggregator at any time step from these noisy partial sums. This Algorithm achieves $\epsilon$-differential privacy and $O(\frac{ T^{1/4}\sqrt{\Delta}}{\epsilon})$ error where $\Delta$ is our dynamic global sensitivity -- which we calculated based on the windows.  

\noindent{\bf Binary mechanism} In Algorithm 2 we show that the error can be reduced to logarithmic in the number of time steps using the so called binary mechanism. The idea is that the grouping of the items depends on the binary representation of the number $t$. Consider a binary interval tree, a partial sum corresponding to each node in the tree is published. To reconstruct the current aggregator it suffices to find a set of nodes in the tree to uniquely cover the time range from 1 to $t$. In this case, every time step appears in $O(\log T)$ partial sums and every aggregator can be represented with a set of $O(\log T)$ nodes.

\noindent{\bf Challenges for the Axe Inventory Obfuscation} In Section~\ref{SecObfuscationMetrics} we carefully and formally define our obfuscation metrics for our use case. As it is described in Section~\ref{SecPLMetric}, our Axe obfuscation algorithms should aim at publishing noisy obfuscated axes that are not too different from the true one, as failure to do so can cause a P$\&$L loss for the bank. However, at the same time we define the leakage probability in  Section \ref{SecLeakageProbability} which is a metric to indicate whether increased or decreased positions to an asset are not
observable in the published noisy axe.

\section{Preliminaries} 

\subsection{Differential Privacy}

Differential privacy~\cite{dwork2006our} states that if there are two databases that differ by only one element, they are statistically indistinguishable from each other. In particular, if an observer cannot tell whether the element is in the dataset or not, she will not be able to determine anything else about the element either. 

\begin{definition}
    ($\epsilon$-differential privacy  \cite{dwork2006calibrating}) For any two neighboring datasets $\mathcal{D}_1\sim\mathcal{D}_2$ that differ by one element, a randomized mechanism $\mathcal{A}$: $\mathcal{D}\rightarrow \mathcal{O}$ preserves $\epsilon$-differential privacy ($\epsilon$-DP) when there exists $\epsilon > 0$ such that,
    \begin{equation}
    \text{Pr [}\mathcal{A}(D_1)\in \mathcal{T} \text{]}\leq e^\epsilon \text{ Pr [}\mathcal{A}(D_2)\in \mathcal{T}\text{]}    
    \end{equation}
    holds for every subset $\mathcal{T} \subseteq \mathcal{O}$, where $\mathcal{D}$ is a dataset, $\mathcal{T}$ is the response set, and $\mathcal{O}$ depicts the set of all outcomes.
\end{definition}

The value $\epsilon$ is used to determine how strict the privacy is. A smaller  $\epsilon$ gives better privacy but worse accuracy. Depending on the application $\epsilon$ should be chosen to strike a balance between accuracy and privacy. 

\begin{definition}
    (Global Sensitivity \cite{dwork2006calibrating}) For a real-valued query function $q: \mathcal{D} \rightarrow \mathbb{R}$, where $\mathcal{D}$ denotes the set of all possible datasets, the global sensitivity of $q$, denoted by $\Delta$, is defined as
    \begin{equation}
        \Delta = \max_{\mathcal{D}_1\sim\mathcal{D}_2} |q(\mathcal{D}_1)-q(\mathcal{D}_2)|,
    \end{equation}
    for all $\mathcal{D}_1\in\mathcal{D}$ and $\mathcal{D}_2\in\mathcal{D}$ .
\end{definition}

\subsubsection*{Laplacian Mechanism}
One of the most well-known techniques in differential privacy is the Laplacian mechanism, which uses random noise $X$ drawn from the symmetric Laplacian distribution.  The zero-mean Laplacian distribution has a symmetric probability density function $f(x)$ with a scale parameter $\lambda$ defined as: 
\begin{equation}
    f(x) = \frac{1}{2\lambda}e^{-\frac{|x|}{\lambda}}.
\end{equation}
Given the global sensitivity, $\Delta$, of the query function $q$, and the privacy parameter $\epsilon$, the \textit{Laplacian mechanism} $\mathcal{A}$ uses random noise $X$ drawn from the Laplacian distribution with scale $\lambda = \frac{\Delta}{\epsilon}$. The Laplacian mechanism preserves $\epsilon$-differential privacy \cite{dwork2006our}. 

In our algorithms, the noise may not come from a single Laplace
distribution, but rather is the sum of multiple independent Laplace distributions. The sum of independent Laplace distributions maintains differential privacy~\cite{dwork2006our,ChanSS11}.

\begin{corollary}\label{cor}
 Suppose $\theta_i$’s are independent random variables, where each $\theta_i$ has Laplace distribution $Lap(b_i)$ and suppose $Y=\sum_i \theta_i$ for $i\in [t]$. The quantity $|Y|$ is at most $O(\sqrt{\sum_i b_i^2}\log\frac{1}{\delta})$. We use the following property of the sum of independent Laplace distributions. 
\end{corollary}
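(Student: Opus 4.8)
The plan is to prove Corollary~\ref{cor} as a Chernoff-type concentration inequality for a sum of independent sub-exponential variables, exploiting the closed form of the moment generating function (MGF) of the Laplace distribution; the claim should be read as ``with probability at least $1-\delta$, $|Y| = O(\sqrt{\sum_i b_i^2}\,\log(1/\delta))$''.

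First I would recall that if $\theta \sim \mathrm{Lap}(b)$ then $\mathbb{E}[e^{h\theta}] = (1 - h^2 b^2)^{-1}$ for all $|h| < 1/b$. Using the elementary bound $(1-x)^{-1} \le e^{2x}$ valid on $0 \le x \le 1/2$, this yields $\mathbb{E}[e^{h\theta_i}] \le \exp(2 h^2 b_i^2)$ whenever $|h| \le 1/(\sqrt{2}\, b_i)$. Setting $b_{\max} = \max_i b_i$ and $\nu = \sum_{i} b_i^2$, independence gives, for every $|h| \le 1/(\sqrt{2}\, b_{\max})$,
\[
\mathbb{E}[e^{hY}] = \prod_{i=1}^{t} \mathbb{E}[e^{h\theta_i}] \le \exp\!\Big(2 h^2 \sum_{i=1}^{t} b_i^2\Big) = \exp(2 h^2 \nu).
\]

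Second, I would run the standard Markov/Chernoff argument: $\Pr[Y > a] \le e^{-ha}\,\mathbb{E}[e^{hY}] \le \exp(2h^2\nu - ha)$ for every admissible $h$. Optimizing the exponent by taking $h = a/(4\nu)$ — admissible as long as $a \le 2\sqrt{2}\,\nu/b_{\max}$ — gives the sub-Gaussian tail $\Pr[Y > a] \le \exp(-a^2/(8\nu))$; for larger $a$, clamping $h$ to its maximal admissible value $1/(\sqrt{2}\, b_{\max})$ gives the sub-exponential tail $\Pr[Y > a] \le \exp(-\Omega(a/b_{\max}))$. Applying the same to $-Y$ and a union bound controls $|Y|$. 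Equating the right-hand side to $\delta$ and solving for $a$ yields $|Y| = O(\sqrt{\nu\log(1/\delta)})$ in the first regime and $|Y| = O(b_{\max}\log(1/\delta))$ in the second; since $b_{\max} \le \sqrt{\nu}$ and $\sqrt{\log(1/\delta)} \le \log(1/\delta)$ for $\delta \le 1/e$, both are dominated by $O\big(\sqrt{\nu}\,\log(1/\delta)\big) = O\big(\sqrt{\sum_i b_i^2}\,\log(1/\delta)\big)$, which is the claimed bound.

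The only real subtlety is the bookkeeping across these two regimes: the Laplace MGF exists only for $|h| < 1/b$, so the sharp sub-Gaussian estimate is valid only up to a threshold on the deviation $a$, past which one must fall back on the linear tail. The corollary's statement, carrying a full $\log(1/\delta)$ factor rather than the tighter $\sqrt{\log(1/\delta)}$, is precisely the clean upper bound that covers both regimes at once, so I would present it in that form rather than tracking the sharper constant. Alternatively, one can simply instantiate the analogous concentration lemma for sums of independent Laplace variables from~\cite{ChanSS11}.
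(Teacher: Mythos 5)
Your proof is correct. The paper itself does not reproduce a proof of Corollary~\ref{cor}; it simply cites the result from~\cite{dwork2006our,ChanSS11}, and your MGF/Chernoff argument — bound the Laplace moment generating function $\mathbb{E}[e^{h\theta_i}]=(1-h^2b_i^2)^{-1}$ by $\exp(2h^2 b_i^2)$ on the admissible range, multiply over $i$, optimize the Chernoff exponent with a clamp at the largest admissible $h$, then fold the sub-Gaussian and sub-exponential regimes into the single cruder bound $O(\sqrt{\sum_i b_i^2}\,\log(1/\delta))$ — is precisely the argument behind the concentration lemma for sums of independent Laplace variables in~\cite{ChanSS11} that the paper invokes.
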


\section{Financial Concepts}\label{sec:finterms}

In the supplementary material (see Section \ref{FinancingActivities}) we introduce the financial concepts and jargon used for the real use case in the the paper. A concise summary of the axe inventory use case is that the publishing bank aggregates its internal firm inventory of long (buy) and short (sell) trades and then provides these offerings to its customers in order to equalize their long and short aggregated positions with regard to a given asset.

\noindent{\bf Outline of Section~\ref{FinancingActivities}:} We define the profit and losses incurred by ``long'' and ``short'' positions (which refer to buying and selling respectively, and are defined in detail below), highlighting the importance of hedging costs for the the bank (``funding'' and ``borrow'' rates). We then demonstrate how banks reduce their hedging costs via a process known as ``internalization'' and how we can reduce such costs by enticing clients to trade via axe lists. Lastly, we describe the implications of sharing axe lists among clients and how axe lists can leak information about the trading activity of clients with large (``concentrated'') positions.\\

\section{Obfuscation Metrics}\label{SecObfuscationMetrics}

In this section, we present the obfuscation metrics used to measure the quality of our differentially private method. We have also implemented a monte carlo simulation engine to estimate such metrics, see supplementary material Section~\ref{SectionAxeSimulation} for the details.

\subsection{P\&L}\label{SecPLMetric}
Here we describe an approximation for the daily inventory P\&L realized when an axe trade is executed with a client. It should be noted, as before, that while our approximation discards some aspects of the netting process and other costs, it is anyway a faithful representation of the true P\&L impact.

Referring to Sec. \ref{FinancingActivities}, the borrow/funding P\&L accrued over one day when keeping a net quantity $x(t)$ of a given asset on balance sheet can be summarized as follows:
\begin{eqnarray}
PL^{INV}(t) = \begin{cases}
-r_F(t) x(t) P(t) & \textrm{if }  x(t)  \ge 0 \\
 r_B(t) x(t) P(t) & \textrm{if }  x(t) < 0 
 \end{cases}
 \label{BSpln}
\end{eqnarray}
where $P(t)$ is the asset's price, $x(t)$ the net inventory positions and $r_F$ / $r_B$ the funding / borrow rates, respectively. From now on we'll restrict our analysis to such P\&L contributions, indicating them as ``inventory P\&L'' or, simply, P\&L. 
Equation \ref{BSpln} can be obtained from Equations \ref{LongPL} and \ref{ShortPL} assuming that all risky P\&L contributions are perfectly hedged and setting $t_E - t_S = 1$. It should be noted, nevertheless, that the total P\&L accrued by the bank is affected by other factors like, for instance, the P\&L deriving from fees charged to clients or other considerations.

When a client accepts an axe trade for a quantity $a_{HIT}(t)$, the net inventory changes from $x(t)$ to $x(t) + a_{HIT}(t)$ and the trade's marginal inventory P\&L, which is the P\&L with the axe trade minus the P\&L without, for the bank reads:
\begin{eqnarray}
\Delta PL^{AXE}(t) &=& PL^{INV}\big(a_{HIT}(t) +x(t)\big)  \\ 
&-&PL^{INV}\big(x(t)\big) \notag  \\ 
&=& PL^{INV}\big(a_{HIT}(t) - a_{TRUE}(t)\big) \notag  \\ 
&-&PL^{INV}\big(- a_{TRUE}(t)\big) \notag
\end{eqnarray}\label{AxePnLFormula}
where $a_{TRUE}(t) = -x(t)$ is the ``true'' axe of the positions in inventory. The marginal P\&L measures the effect of a trade on the bank's inventory P\&L: a large positive marginal P\&L is associated with a good trade from a P\&L perspective. By the same token, trades with negative marginal P\&L would cause a P\&L loss for the bank if executed. 

The behaviour of the marginal P\&L as a function of the traded axe quantity $a_{HIT}(t)$ is rather intuitive, see Fig. \ref{AxePnL}:
\begin{itemize}
\item[-] The maximum is achieved when the traded axe quantity is equal to the true one, $a_{HIT}(t) = a_{TRUE}(t)$. That corresponds to the perfect  case in which the axe trade fully consumes the balance sheet, flattening it to zero, hence removing all inventory costs.
\item[-] When the traded axe $a_{HIT}(t)$ is larger then the true axe $a_{TRUE}(t)$, the marginal P\&L decreases at the funding rate $r_F(t)$. That is because any additional axe quantity will make the net inventory longer, hence increasing the funding costs.
\item[-] Otherwise, the marginal P\&L increases at the borrow rate $r_B(t)$. Any smaller traded axe quantity makes the net inventory shorter and increases the borrowing costs.
\end{itemize}

It should be noted that an axe trade generates a profit only when its quantity stays close enough to the true axe quantity:
\begin{eqnarray}\label{PLBounds}
a_{HIT}(t) \in \begin{cases} 
\Big[0, a_{TRUE}(t) \Big(1 + \frac{r_B(t)}{r_F(t)}\Big)\Big] &\textrm{if }  a_{TRUE}(t) \ge 0 \\ \\
\Big[a_{TRUE}(t) \Big(1 + \frac{r_F(t)}{r_B(t)}\Big), 0\Big] &\textrm{if }  a_{TRUE}(t) < 0
\end{cases}
\end{eqnarray}
The intervals above identify the values of $a_{HIT}(t)$ making the marginal P\&L $\Delta PL^{AXE}(t)$ positive.
As a consequence, axe obfuscation algorithms should aim at publishing obfuscated axes that are not too different from the true one, as failure to do so can cause P\&L losses for the bank.

\begin{figure}
  \centering
  \includegraphics[width=0.5\textwidth]{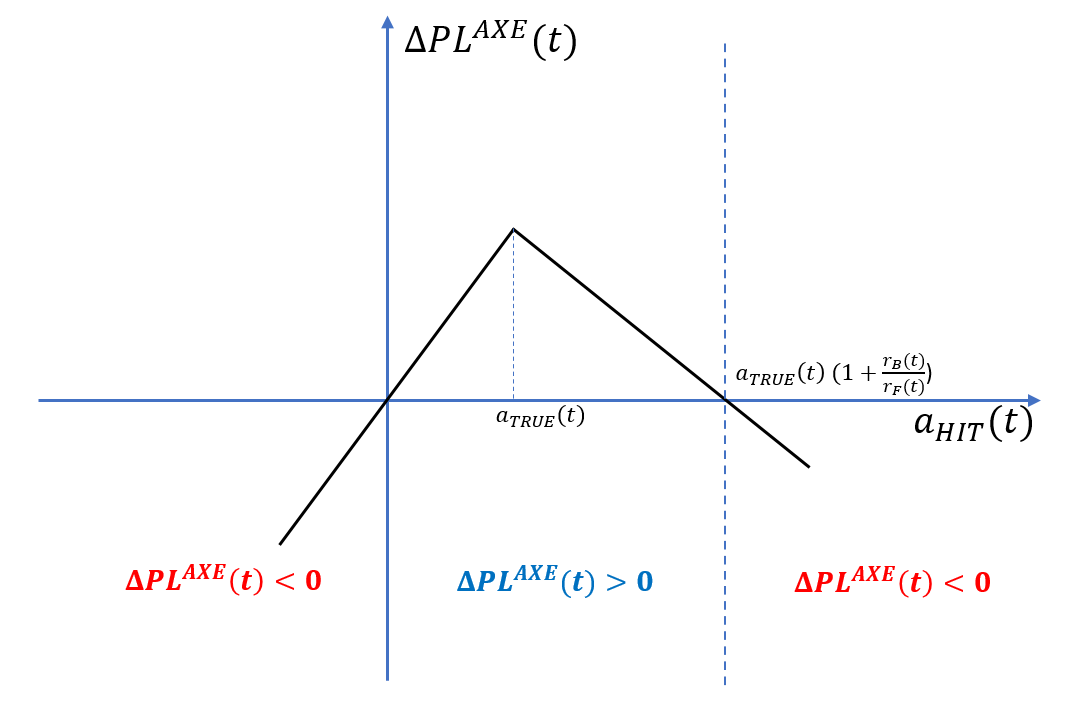}
  \caption{Marginal P\&L profile (Y-axis) of a long axe trade as a function of the axe quantity traded by a client (X-axis).}\label{AxePnL}
\end{figure}

\subsection{Leakage Probability}\label{SecLeakageProbability}
We define the Leakage Probability as the probability of correctly guessing a client trading direction (i.e. whether a given fund is increasing or decreasing its positions on given assets) using the direction of change of the published axe, i.e.:
\begin{eqnarray}
LP(t) = Prob\Big[\sgn \Big( \big(p(t)-p(t-\tau)\big)\\ \big(a_{PUB}(t)-a_{PUB}(t-\tau)\big) \Big) < 0 \Big]  \notag
\end{eqnarray}
where $a_{PUB}(t)$ is the published obfuscated axe list, $p(t)$ is the client's position in the asset and $\tau$ a time lag (a few days, typically).
Please notice that when the client increases their positions ($p(t)-p(t-\tau) > 0$) the effect on the axe is the opposite ($a_{PUB}(t)-a_{PUB}(t-\tau) < 0$), and viceversa hence the definition above.

From a practical point of view, both the direction and quantity of the change in the published axe are important. Our definition of Leakage Probability keeps into account only the direction because it is meant to be a simple and ”robust” estimator of the information leaked
by the published axe. Any attacker able to detect both the sign and the quantity of change in the true axe will also be able to infer the sign only.
A high Leakage Probability denotes a situation in which an attacker could understand whether the bank (or a concentrated client) are taking new positions on. A low Leakage Probability means, instead, that the bank trading decisions (whether they have been increasing or decreasing their exposure to an asset) are not observable in the published axe.

We also define the over-axe frequency and worst case cost in the supplementary material Section~\ref{SectionAxeSimulation}. The over-axe measures how often the published axe, if fully accepted by clients, would cause a negative inventory P\&L / loss for the bank

\section{Axe Obfuscation via Continual Aggregator DP Mechanism}\label{sec:algo}

We consider streams of positive and negative numbers. Specifically, $\sigma(t)$ at time $t \in \mathbb{N}$ denotes a positive or a negative integer. At every time $t$, we wish to output the sum of numbers that have arrived up to time $t$. 

\begin{definition}
(Continual Aggregator) Given a stream $\sigma$ of positive and negative integers, let $\sigma(t)\in \mathbb{Z}$ be an integer at time step $t\in \mathbb{N}$ and let: \begin{eqnarray*}
\sigma^{+}(t) &=& \sigma(t) \textrm{ if } \sigma(t) \ge 0 \textrm{ else } 0, \\
\sigma^{-}(t) &=& \sigma(t) \textrm{ if } \sigma(t) \le 0 \textrm{ else } 0,
\end{eqnarray*} the aggregator for the stream is a mapping $\alpha : \mathbb{Z} \rightarrow \mathbb{Z}$ such that for each $t \in \mathbb{N}$, $\alpha(t) := \sum_{i=1}^t \sigma^{+}(i) + \sum_{i=1}^t \sigma^{-}(i)$.
\end{definition}

 Next, we define the notion of a continual aggregator mechanism which continually outputs the sum of integers seen thus far.

\begin{definition}(Continual Aggregator Mechanism)
A counting mechanism $M$ takes a stream $\sigma$ of integers in $\mathbb{Z}$ and produces a (randomized) mapping $M(\sigma): \mathbb{Z} \rightarrow \mathbb{Z}$. Moreover, for all $t \in \mathbb{N}$, $M(\sigma)(t)$ at timestep $t$ is independent of all $\sigma(i)$’s for $i > t$. 
\end{definition}

\begin{definition}(Utility)
An aggregator mechanism $M$ is $(\lambda, \delta)$-useful at time $t$, if for any
stream $\sigma$, with probability (over the randomness of $M$) at least $1 - \delta$, we have $|\alpha(t) -
M(\sigma)(t)| \leq \lambda$. Note that $\lambda$ may be a function of $\delta$ and $t$.
\end{definition}

The above definition covers the usefulness of the mechanism for a single time step. A standard union bound argument can be used for multiple time steps. 

\noindent{\bf Sensitivity:} By using clipping to bound the sensitivity of our summation queries, we are able to enforce upper $\max$ and lower $\min$ bounds on the positions. This ensures that all positions will be below the upper bound, and the resulting sensitivity of a summation query is equal to the difference between the upper and lower bounds used in clipping, $\max-\min$ over a period of time $T$, denoted as either $\max_T-\min_T$ or, for brevity, $\Delta$ in the rest of the paper. We do not choose our clipping bounds by looking at the data; instead, we calculate them by exploiting a property of the dataset that can be determined without viewing the data, thereby providing us with prior knowledge about the scale of the data for clipping. The property refers to the Average Daily Trading Volume (ADTV) of each stock which helps us determine the bounds. ADTV is the average number of shares traded within a day for a given stock, calculated by taking the total number of shares traded over a period of time and dividing it by the number of days in that period. As a rule of thumb in our use case the daily added positions of a client is never above the ADTV (our $\max$) and this is public knowledge, i.e. the bank forbids trades exceeding the Average Daily Trading Volume (ADTV). It is forbidden by the bank since the bank does not want to take any risk executing such large daily trades with (concentrated) clients. We would like to note that ADTV can help confirm trends and patterns to  market participants, which is public information that we do not seek to hide.

\noindent{\bf Our Window Algorithm:}
We concentrate on obfuscating the already clipped True Axe's changes over 1-day periods.
Given a time-grid $\{t=0\ldots T-1\}$, define the True Axe differences as:

\begin{eqnarray*}
\sigma(t) = a_{TRUE}(t) - a_{TRUE}(t-1)
\end{eqnarray*}

Then split them into positive and negative parts:
\begin{eqnarray*}
\sigma^{+}(t) &=& \sigma(t) \textrm{ if } \sigma(t) \ge 0 \textrm{ else } 0 \\
\sigma^{-}(t) &=& \sigma(t) \textrm{ if } \sigma(t) \le 0 \textrm{ else } 0 \\
\end{eqnarray*}

We can reconstruct the True Axe as:
\begin{eqnarray*}
a_{TRUE}(t) &=& a_{TRUE}(0) + \sum_{i=1}^t \sigma^{+}(i) + \sum_{i=1}^t \sigma^{-}(i)
\end{eqnarray*}

Perturbations are applied to the Axe differences ($\theta^{+}(t)$ and $\theta^{-}(t)$ are random shocks we describe in a moment):
\begin{eqnarray*}
\alpha^{+}(t) &=& \sigma^{+}(t) + \theta^{+}(t) \\
\alpha^{-}(t) &=& \sigma^{-}(t) + \theta^{-}(t) \\
\end{eqnarray*}


The Published Axe is eventually given by:
\begin{eqnarray*}
a_{PUB}(t) &=& a_{TRUE}(0) + \sum_{i=1}^t \alpha^{+}(i) + \sum_{i=1}^t \alpha^{-}(i) \\
&=& a_{TRUE}(t) + \Theta^{+}(t) + \Theta^{-}(t) \\
\end{eqnarray*}
where $\Theta^{+}(t)\mathrm{ ,}\Theta^{+}(t)$ are the cumulative shocks.

To get better efficiency we also split the time-grid into buckets, each long $B$ days:
\begin{eqnarray*}
t &=& d(t)B + c(t) \\
c(t) &=& t \Mod{B} \\
d(t) &=& \frac{t - c(t)}{B}
\end{eqnarray*}

and define the cumulative shocks $\Theta^{+}(t)\mathrm{ ,}\Theta^{+}(t)$ as follows:

\begin{eqnarray*}
\Theta^{+}\big(t\big) &=& \Theta^{+}\big(p(t)\big) + \theta^{+}\big(t\big)  \\
\Theta^{-}\big(t\big) &=& \Theta^{-}\big(p(t)\big) + \theta^{-}\big(t\big)  \\
\end{eqnarray*}
where $T>B$ is a reset period and $p(t)$ is the start of the current B-bucket:
\begin{eqnarray*}
p(t) &=& d(t)B
\end{eqnarray*}

The random shocks inside a T-period are given by sensitivity which is the difference between maximum and minimum change of the True Axe:
\begin{eqnarray*}
\theta^{+}\big(t\big) &\sim& \textrm{Lap} \left( \frac{\Big|\max_{i\in [p(t),t]} - \min_{i\in [p(t),t]} \Big|}{\epsilon}\right)\\
\theta^{-}\big(t\big) &\sim& \textrm{Lap} \left( \frac{\Big|\max_{i\in [p(t),t]} - \min_{i\in [p(t),t]} \Big|}{\epsilon}\right)\\
\end{eqnarray*}

See Algorithm~\ref{Algorithm1} in the supplementary material for a concrete description of our algorithm.

\begin{theorem}\label{them}
Let $0<\delta<1$ and $\epsilon>0$. The continual aggregator mechanism is $2\epsilon$-differentially private. Furthermore, for each $t \in \mathbb{N}$, the  mechanism with block size $B$ is $(O(\frac{1}{\epsilon}\cdot\sqrt{\Delta\cdot (T/B+B)}\cdot\log \frac{1}{\delta}),\delta$)-useful at time $t$ out of the $T$ time steps.  
\end{theorem}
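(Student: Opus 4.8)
The statement has two independent parts, and I would prove them separately: the $2\epsilon$-differential privacy claim, and the $(\lambda,\delta)$-usefulness claim with $\lambda=O(\tfrac1\epsilon\sqrt{\Delta\,(T/B+B)}\,\log\tfrac1\delta)$.

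\emph{Privacy.} The plan is to exhibit the whole transcript $\{a_{PUB}(t)\}_{t\le T}$ as a deterministic post-processing of two independently-noised families of partial sums: a \emph{fine} family carrying one fresh Laplace shock $\theta^{\pm}(t)$ per day, and a \emph{coarse} family carrying one fresh Laplace shock per completed length-$B$ bucket. First I would record the governing sensitivity fact: since the True Axe is clipped, every increment obeys $|\sigma(t)|\le \max_T-\min_T=\Delta$, so for two neighboring streams differing only at a single time step $i$, among the fine family only $\sigma^{+}(i)$ and $\sigma^{-}(i)$ (hence $\alpha^{+}(i),\alpha^{-}(i)$) change, and among the coarse family only the one bucket containing $i$ changes — and in every case the underlying noiseless quantity moves by $O(\Delta)$. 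Because each such quantity is released with an independent $\mathrm{Lap}(\Delta/\epsilon)$ shock, the Laplace-mechanism guarantee plus parallel composition over the disjoint buckets makes the fine family $\epsilon$-DP and the coarse family $\epsilon$-DP; sequential composition of the two families gives $2\epsilon$-DP, and post-processing invariance extends this to the entire continually-observed sequence at once — crucially, without any dependence on $T$.

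\emph{Utility.} Fix $t$ and write $t=d(t)B+c(t)$. From the reconstruction identity $a_{PUB}(t)=a_{TRUE}(t)+\Theta^{+}(t)+\Theta^{-}(t)$, the error at time $t$ equals $\Theta^{+}(t)+\Theta^{-}(t)$. Unrolling the construction, this is a sum of mutually independent Laplace variables: one per completed bucket, at most $2T/B$ of them, plus one per day inside the current bucket, at most $2(B+1)$ of them — hence $N=O(T/B+B)$ Laplace terms, each of scale proportional to the clipping-based sensitivity $\Delta$ divided by $\epsilon$. Applying Corollary~\ref{cor} to this sum bounds $|\Theta^{+}(t)+\Theta^{-}(t)|$ by $O(\tfrac1\epsilon\sqrt{\Delta\,(T/B+B)}\,\log\tfrac1\delta)$ with probability at least $1-\delta$, which is exactly the claimed $(\lambda,\delta)$-usefulness at time $t$; the union-bound remark after the Utility definition then upgrades this to all $T$ steps if desired.

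\emph{Main obstacle.} The delicate point is the privacy accounting across the whole stream rather than a single release. One has to verify that the scale used to calibrate each shock is always dominated by the global, data-independent bound $\Delta$ coming from the ADTV, so that every published partial sum genuinely carries Laplace noise of scale at least (its sensitivity)$/\epsilon$; and one has to pin down the bucket bookkeeping governed by $p(t)$, $d(t)$, $c(t)$ so that flipping a single increment provably disturbs only $O(1)$ published partial sums \emph{per level}. That $O(1)$ — rather than $O(T)$, as in the naive and simple mechanisms — is both what keeps the privacy loss at $2\epsilon$ and what guarantees that every shock is counted exactly once in the utility sum. Once this combinatorial alignment is nailed down, what remains is routine: the Laplace mechanism, parallel and sequential composition, and Corollary~\ref{cor}.
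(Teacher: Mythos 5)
Your proposal is correct and takes essentially the same route as the paper: the privacy claim rests on the observation that each stream item enters at most one $\alpha$ partial sum and at most one $\beta$ partial sum (yielding $2\epsilon$-DP), and the utility bound follows by applying Corollary~\ref{cor} to the $O(T/B+B)$ independent Laplace shocks accumulated in $a_{PUB}(t)$. Your framing via parallel and sequential composition plus post-processing is merely a more explicit restatement of the paper's terse ``at most two noisy partial sums are affected'' argument.
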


\begin{proof}
We will use the term item to refer to an integer in the stream $\sigma$. We let $\sum_{k=I}^j \sigma(k)$ to denote a partial sum involving items $i$ through item $j$. We start by observing that each item $\sigma(t)$ in the published stream $\alpha_{PUB}(t)$ appears in at most two noisy partial sums: at most one of the $\beta$’s and at
most one of the $\alpha$’s. In particular, let $d(t) = \frac{t}{B}$, then $\sigma(t)$ appears in only $\beta(d(t))$ and $\alpha(t)$. That said, the aggregator mechanism preserves $2\epsilon$-differential privacy. This means that if we change $\sigma(t)$, at
most $2$ noisy partial sums will be affected.

Next we focus on the utility: Observe that at any time $t = d(t)B + c(t)$
where $d(t), c(t) \in \mathbb{Z}$ and $0 \leq c(t) < B$, the error of $\alpha_{PUB}(t)$ includes the sum of $K =  d(t) + c(t)$ independent
Laplacian distributions $Lap(\Delta/\epsilon)$. The estimated aggregator at any time $t$ is the sum of at most $\lfloor t/B \rfloor+B$ partial sums. Followed from Corollary~\ref{cor}, since $t/B \leq K \leq (t/B + B)$, at time T, the Aggregator Mechanism is $(O(\frac{1}{\epsilon}\cdot\sqrt{\Delta\cdot (T/B+B)}\cdot\log \frac{1}{\delta}),\delta$)-useful at time t .

\end{proof}

Suppose a mechanism M adds $Lap( \Delta/\epsilon)$ noise to every partial sum before releasing it. In M, each item in the stream appears in at most $x$ partial sums, and each estimated aggregator is the sum of at most $y$ partial sums. Then, the mechanism M achieves $x\epsilon$-differential privacy. Moreover from Corollary~\ref{cor} the error is $O(\frac{\sqrt{\Delta\cdot y}}{\epsilon} )$ with high probability. Alternatively, to achieve $\epsilon$-differential privacy, one can scale appropriately by having $\epsilon’ = \epsilon/x$. Now if the mechanism instead adds  $Lap( \Delta/\epsilon')$  noise to each partial sum we achieve $\epsilon$-differential privacy, and $O(\frac{ x \sqrt{\Delta\cdot y}}{\epsilon }) $ error with high probability.

If we let $B =\sqrt{T}$, then the estimated aggregator at any time is the sum of at most $2B$ noisy partial sums. The
error is roughly $O(\frac{ T^{1/4}\sqrt{\Delta}}{\epsilon})$ with high probability.

\subsection{Binary Mechanism with Less Error}

In Algorithm \ref{Algorithm2} of the supplementary material we present a version of our mechanism which incurs a smaller error. In Algorithm ~\ref{Algorithm2} the estimated aggregator is the sum of at most $\log T$ noisy partial sums and each partial sum has an independent Laplace noise of $Lap(\frac{\log T\cdot\Delta}{\epsilon})$. That said, from Corollary~\ref{cor} we get the following: 

\begin{theorem}[Utility]\label{them2}
The continual aggregator mechanism in Algorithm \ref{Algorithm2} is $(O(\frac{1}{\epsilon}\cdot (\log T)\cdot\sqrt{\Delta\cdot \log t}\cdot\log \frac{1}{\delta}),\delta$)-useful at time $t$ out of the $T$ time steps.  
\end{theorem}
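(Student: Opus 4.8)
The plan is to prove Theorem~\ref{them2} by reducing the utility analysis to a direct application of Corollary~\ref{cor}, exactly as was done for the window mechanism in Theorem~\ref{them}, but with the binary interval tree replacing the two-level bucketing. First I would recall the structure of Algorithm~\ref{Algorithm2}: the horizon $\{1,\dots,T\}$ is organized as a binary interval tree whose nodes correspond to dyadic intervals, and for every node $I$ the algorithm publishes noisy partial sums $\sum_{k\in I}\sigma^{+}(k)+\theta_I^{+}$ and $\sum_{k\in I}\sigma^{-}(k)+\theta_I^{-}$, with $\theta_I^{\pm}\sim\mathrm{Lap}(\log T\cdot\Delta/\epsilon)$ independent. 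The published aggregator $\alpha_{PUB}(t)$ is then obtained by adding up the noisy partial sums over the canonical set of dyadic intervals that tile $[1,t]$, for both the positive and the negative streams.

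Next I would establish the two combinatorial facts that the paragraph preceding the theorem already asserts. (i) Each item $\sigma(t)$ lies in at most $\log T$ dyadic intervals, one per level of the tree, so flipping a single item perturbs at most $\log T$ published positive partial sums and at most $\log T$ negative ones; this is why the Laplace scale is inflated to $\log T\cdot\Delta/\epsilon$, so that after composition over the $\le\log T$ affected sums the mechanism is $\epsilon$-DP per stream, i.e.\ $2\epsilon$-DP overall (the privacy part is the exact analogue of Theorem~\ref{them} and I would only sketch it, since the statement here is about utility). (ii) For every $t$, the set $[1,t]$ is a disjoint union of at most $\lceil\log t\rceil$ dyadic intervals, hence $\alpha_{PUB}(t)-\alpha(t)$ equals a sum of at most $\lceil\log t\rceil$ of the noises $\theta_I^{+}$ plus at most $\lceil\log t\rceil$ of the noises $\theta_I^{-}$, i.e.\ at most $2\lceil\log t\rceil$ independent Laplace variables, each with scale $b=\log T\cdot\Delta/\epsilon$.

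Then I would invoke Corollary~\ref{cor}: writing $Y=\alpha_{PUB}(t)-\alpha(t)=\sum_i\theta_i$ over these at most $2\lceil\log t\rceil$ terms with each $b_i=b$, we get $\sqrt{\sum_i b_i^2}=O\big(b\sqrt{\log t}\big)=O\big(\tfrac{\log T}{\epsilon}\sqrt{\Delta\log t}\big)$, where $\Delta$ is read as the dynamic $\max-\min$ sensitivity already discussed for the window algorithm (if a per-level sensitivity $\Delta_\ell$ is used, the same bound holds with $\Delta=\max_\ell\Delta_\ell$). Corollary~\ref{cor} then yields, with probability at least $1-\delta$, $|Y|=O\big(\sqrt{\sum_i b_i^2}\,\log\tfrac1\delta\big)=O\big(\tfrac1\epsilon(\log T)\sqrt{\Delta\log t}\,\log\tfrac1\delta\big)$, which is precisely the claimed $(\lambda,\delta)$-usefulness at time $t$; a union bound over the $T$ steps handles all times simultaneously, as noted after the Utility definition.

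I expect the only real subtlety — the ``hard part'' — to be the bookkeeping around the sensitivity parameter: making sure the $\Delta$ in the Laplace scale is the correct dynamic sensitivity for the dyadic blocks so that $\epsilon$-DP is genuinely preserved under the $\le\log T$-fold composition, and that the very same quantity is what propagates into the error bound. The concentration step itself is immediate from Corollary~\ref{cor}, and the two tree-combinatorics facts are the standard properties of the binary mechanism of~\cite{DworkNPR10,ChanSS11}, requiring no new argument.
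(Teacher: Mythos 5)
Your proposal reproduces the paper's own argument essentially verbatim: decompose $[1,t]$ into at most $O(\log t)$ dyadic intervals, note that each published partial sum carries independent $\mathrm{Lap}(\log T\cdot\Delta/\epsilon)$ noise, and apply Corollary~\ref{cor} to the resulting sum of $O(\log t)$ Laplace variables. This is exactly what the paper does in the one-paragraph justification preceding Theorem~\ref{them2}, so in that sense the approach matches.

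One remark worth flagging, although it is a slip you share with the paper rather than one you introduced: in the step
$\sqrt{\sum_i b_i^2}=O\bigl(b\sqrt{\log t}\bigr)$ with $b=\log T\cdot\Delta/\epsilon$, the resulting quantity is $O\bigl(\tfrac{\log T\cdot\Delta}{\epsilon}\sqrt{\log t}\bigr)$, i.e.\ linear in $\Delta$. You then write it as $O\bigl(\tfrac{\log T}{\epsilon}\sqrt{\Delta\log t}\bigr)$, which has $\sqrt{\Delta}$ instead of $\Delta$; this does not follow from Corollary~\ref{cor} unless $\Delta$ is $O(1)$. The same $\Delta$-to-$\sqrt{\Delta}$ replacement appears in the paper's statements of both Theorems~\ref{them} and~\ref{them2} and in the text after Theorem~\ref{them}, so your proof is faithful to the paper's claim, but the claim itself appears to overstate the dependence on $\Delta$ by a factor of $\sqrt{\Delta}$. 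Everything else --- the tree combinatorics, the at most $\log T$ partial sums per item for the privacy count, the at most $O(\log t)$ partial sums per reconstructed aggregator, the union bound over $T$ steps --- is sound and aligns with the paper.
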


\section{Atlas-X Performance}\label{sec:exp}

We have implemented our differentially private continual Aggregator mechanism in the production (in Python code) of the internal brokerage platform. The production system operates on a Linux machine based on python 3.7 (256GB memory). The algorithm runs daily to obfuscate the bank's axe list for three different regions (USA, Europe and Asia). The obfuscated axe is then sent to roughly 60 selected clients (hedge funds), which all receive the same axe list. The rest of this section describes the results of several experiments performed with the monte carlo simulation engine described in Section \ref{SectionAxeSimulation} of the supplementary material based on real inventory data.

\begin{figure*}[!htb]
\minipage{0.32\textwidth}
  \includegraphics[width=\linewidth]{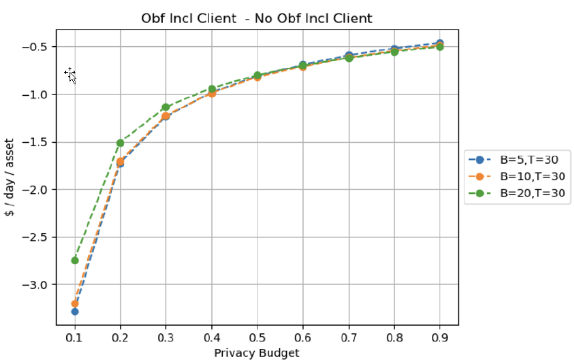}
  \caption{Expected inventory P\&L difference (Y-axis) between the case in which the bank publishes the DP obfuscated axe including the most concentrated client versus those calculated with the true (un-obfuscated) axe, measured in dollar per day per asset and calculated for different privacy budgets $\epsilon$ (X-axis) as well as obfuscation parameters.}\label{PLVsNoObfInclClient}
\endminipage\hfill
\minipage{0.32\textwidth}
  \includegraphics[width=\linewidth]{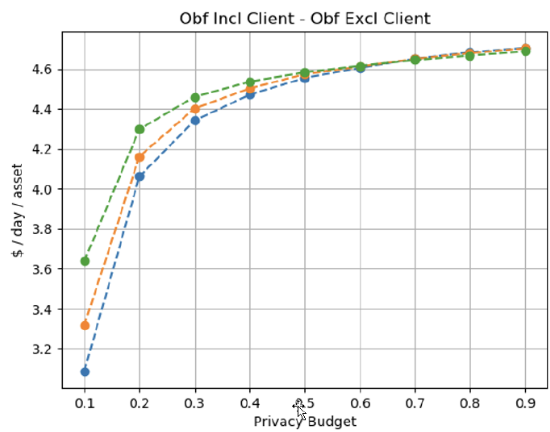}
 \caption{Expected inventory P\&L difference (Y-axis) between publishing the obfuscated axe with and without the concentrated client, respectively, measured in dollar per day per asset and calculated for different privacy budgets $\epsilon$ (X-axis) as well as obfuscation parameters.}\label{PLVsObfExclClient}
\endminipage\hfill
\minipage{0.32\textwidth}%
  \includegraphics[width=\linewidth]{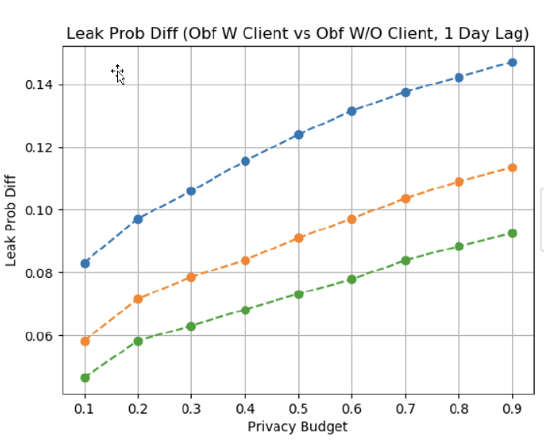}
   \caption{Expected Leak Probability difference (Y-axis) between publishing the
DP-obfuscated axe including the most concentrated client versus those excluding it, calculated for different privacy budgets $\epsilon$ (X-axis) and with a lag of 1 day.}\label{LeakProb-1d}
\endminipage
\end{figure*}

We first turn to the important question of how to determine the obfuscation model parameters discussed in Section \ref{sec:algo}. To do that, we have, instead, run several monte carlo simulations (using the methodology described in Section \ref{SectionAxeSimulation} of the supplementary material) for a grid of obfuscation parameters $\{(\epsilon, T,  B)\}$. We have then measured the obfuscation statistics described in Section \ref{SecObfuscationMetrics}, namely the Expected P\&L / Leakage Probability and discussed the results with the trading desk who were took the final decision (in the supplementary material we also measure the Over Axe Frequency and Worst Case Cost). Moreover, given that our application revolves around high-frequency trading, signals spanning a few days pose a significant threat if revealed. We have judiciously chosen a time parameter of T=30 days.

Our analysis was based on the selection of the most concentrated client, whose positions were effectively driving our axe on many assets. In particular, Fig. \ref{PLVsNoObfInclClient} shows the expected inventory P\&L difference between the case in which the bank publishes the DP obfuscated axe including the most concentrated client versus those calculated with the true (un-obfuscated) axe. As expected the results, calculated for different privacy budgets as well as obfuscation parameters, show that publishing the true axe is always advantageous from a P\&L perspective. The expected loss with the DP budget chosen in production ($\epsilon = 0.3$) is roughly 1\$ (daily, for each asset we publish an axe for). The P\&L difference increases with $\epsilon$ and flattens for larger values of $\epsilon$.

Fig. \ref{PLVsObfExclClient}, instead, compares the expected P\&L difference between publishing the obfuscated axe with and without the concentrated client, respectively, using the same model parameters. The results, show that there is an average 4.4\$ P\&L increase when including the concentrated client (per asset, per day) using the DP budget chosen in production. Again, the P\&L difference increases with $\epsilon$ and flattens for larger values.

The $P\&L$ estimates above were then compared with the expected Leakage Probability using the same model parameters. Fig. \ref{LeakProb-1d}, \ref{LeakProb-1w} and Fig. \ref{LeakProb-2w} (Fig. \ref{LeakProb-2w} in  the supplementary material) show the difference in the estimated Leakage Probability between publishing the obfuscated axe with and without the most concentrated client for time lags of 1-day / 1-week / 2-weeks, respectively. The results show that, for the DP budget chosen in production, there is an increase of $\sim$ 6\% in Leak Probability when including the most concentrated client over a 1-day lag, and $\sim$ 3\%  for 1-week or-2 weeks lags instead. That means that only in $3\%$ of the cases our published axe leaks information about the trading activity (direction) of the concentrated client over a 1-week or 2-weeks lags. This was considered as acceptable by the trading desk and the final model parameters used in production, corresponding to $(\epsilon = 0.3, T=30,  B=20)$.


\begin{figure*}[!htb]
\minipage{0.43\textwidth}
  \includegraphics[width=\linewidth]{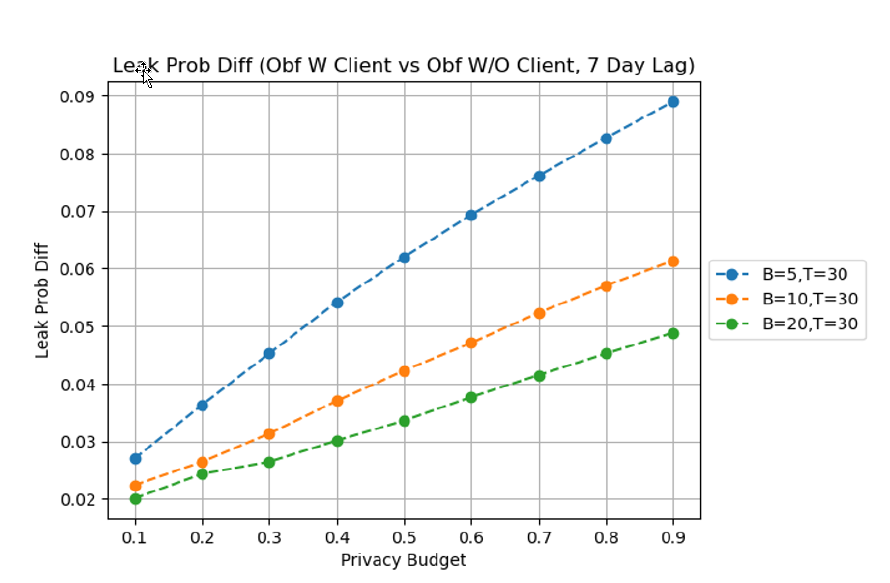}
  \caption{Expected Leak Probability difference (Y-axis) between publishing the
DP-obfuscated axe including the most concentrated client versus those excluding it, calculated for different privacy budgets $\epsilon$ (X-axis) and with a lag of 1 week.}\label{LeakProb-1w}
\endminipage\hfill
\minipage{0.48\textwidth}
  \includegraphics[width=\linewidth]{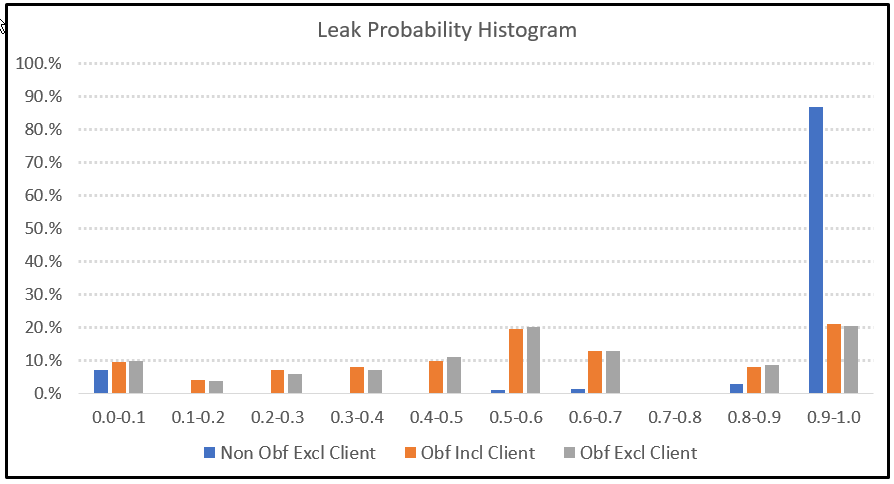}
  \caption{Histogram of simulated Leak Probabilities, with 2-weeks lag. The X-axis corresponds to the distribution deciles. The Y-axis reports the histogram frequencies for three different simulation assumptions: (a - Non Obf Excl Client) the Bank publishes the un-obfuscated axe, calculated without keeping the client's positions into account, (b - Obf Incl Client) the Bank publishes the obfuscated axe, including the client's positions, (c - Obf Excl Client) the Bank publishes the obfuscated axe, excluding the client's positions.}\label{LeakProbHist}
\endminipage\hfill
\end{figure*}

In a different experiment, we have analyzed a population of 600 assets traded by a highly concentrated client, for a period of one year based on real data, performing monte carlo simulations using the production DP parameters. Fig. \ref{LeakProbHist} shows the simulated histogram of the 2-weeks Leakage Probability in three different scenarios: (a) The bank publishes the un-obfuscated axe, calculated without keeping the client's positions into account. (b) The bank publishes the obfuscated axe, including the client's positions. (c) The bank publishes the obfuscated axe, excluding the client's positions.
The histogram shows, when one compares the data for cases (b) and (c), that the obfuscated axes calculated with and without the client's positions are indistinguishable. 

Please also notice that the strategy (a) of publishing the axe without DP obfuscation but calculated w/o the concentrated client positions, actually results in a Leakage Probability between 90\% and 100\% for the majority of the assets simulated (80\%). This means that, even when removing a client contribution from the axe calculation, it can often happen that the published axe moves in the same direction as the client’s trading activity. This is not a case of information leakage, but instead one in which hedge funds are effectively trading along similar strategies. It is a phenomenon sometimes called ”herding behaviour”. As discussed, it is nevertheless very important to publish an axe calculated using all clients positions, no matter how concentrated, because failure to do so would expose the bank to P\&L losses. In the supplementary material we present more results.

\section{Conclusion}

The Axe Inventory offering is a fundamental service in the financial world. In this work, we introduce a differential private mechanism under continual observation to obfuscate the axe inventory daily while hiding the client trading activity. Along the way we propose a continual differential private aggregator for streams of numbers. Our system is successfully running live in production in three different regions (USA,
Europe and Asia) at a major USA bank and, to our knowledge, it is the first differential privacy solution to be deployed in the financial sector. 


\begin{acks}
 This paper was prepared in part for information purposes by the Artificial Intelligence Research group of JPMorgan Chase \& Co, AlgoCRYPT CoE and its affiliates (``JP Morgan''), and is not a product of the Research Department of JP Morgan. JP Morgan makes no representation and warranty whatsoever and disclaims all liability, for the completeness, accuracy or reliability of the information contained herein. This document is not intended as investment research or investment advice, or a recommendation, offer or solicitation for the purchase or sale of any security, financial instrument, financial product or service, or to be used in any way for evaluating the merits of participating in any transaction, and shall not constitute a solicitation under any jurisdiction or to any person, if such solicitation under such jurisdiction or to such person would be unlawful. 2024 JP Morgan Chase \& Co. All rights reserved.
\end{acks}



\bibliographystyle{ACM-Reference-Format} 
\bibliography{AAMAS_2024_sample}

\appendix

\section{Financing Activities of a Prime Desk}\label{FinancingActivities}

In this section  we introduce the financial concepts and jargon used for the real use case in the the paper. A concise summary of the axe inventory use case is that the publishing bank aggregates its internal firm inventory of long (buy) and short (sell) trades and then provides these offerings to its customers in order to equalize their long and short aggregated positions with regard to a given asset.

\noindent{\bf Outline:} We define the profit and losses incurred by ``long'' and ``short'' positions (which refer to buying and selling respectively, and are defined in detail below), highlighting the importance of hedging costs for the the bank (``funding'' and ``borrow'' rates). We then demonstrate how banks reduce their hedging costs via a process known as ``internalization'' and how we can reduce such costs by enticing clients to trade via axe lists. Lastly, we describe the implications of sharing axe lists among clients and how axe lists can leak information about the trading activity of clients with large (``concentrated'') positions.\\

In order to maintain a level of simplicity, we have disregarded the formalization of complexities regarding the execution of real trades that are not central to the main themes of this paper.

The Prime desk of a broker-dealer bank facilitates clients' trading activities. Clients are generally hedge funds willing to profit on the price movement of assets belonging to a large universe, encompassing Equity / Fixed Income and Commodity instruments.

When a client is betting on a given asset to appreciate in the future, he/she can execute a ``long'' trade i.e. buy the asset with the idea of selling it later to capitalize on the price increase. To do that, the client needs to raise cash to cover the initial purchase. The client's Profit \& Loss (P\&L) after the asset is sold back in the market is given by:
\begin{eqnarray}
P\&L = N \big( P(t_{E}) - P(t_{S})\big) - N P(t_{S}) r_F (t_{E} - t_{S}) \label{LongPL}
\end{eqnarray}
where $t_{S}$ and $t_{E}$ are the trade's inception(Start)/termination(End) times, respectively. $N$ is the number of shares/units traded, $P(t)$ is the price of the asset at time $t$ and $r_F$ is the ``funding rate'' i.e. the interest rate paid to borrow the cash required to buy the asset.

If, instead, the client's view is that an asset will lose value in the future, he/she can put on a ``short'' trade i.e. sell the asset to then buy it back when its price has decreased in the market. To do so, the client needs to borrow the asset to cover the initial sale. In this case the client's P\&L after the trade is terminated is:
\begin{eqnarray}
P\&L = N \big( P(t_{S}) - P(t_{E})\big) - N P(t_{S}) r_B (t_{E} - t_{S}) 
\label{ShortPL}
\end{eqnarray}
where $r_B$ is the ``borrow rate'', that is the interest paid to borrow the asset.

The first term in Eq. \ref{LongPL} and \ref{ShortPL} is the ``risky'' P\&L, called in such a way because it is affected by the asset price movements. The second term is the funding/borrowing cost.

In general, the ``funding rate'' $r_F$ depends on the clients' credit-worthiness as well as the quality of the asset. It measures the ability of an institution to raise cash and it is in general not very volatile. The ``borrow rate'' $r_B$, instead, depends much more on the asset borrowed and can be very volatile when the underlying asset is under market stress.

A Prime desk facilitates the aforementioned activities by allowing its clients to execute ``leveraged'' trades, which means supplying clients with the required cash or assets to initiate their long/short trades, in exchange for a fee. When it does so, the Prime desk needs to hedge the corresponding risky P\&L and minimize the funding/borrowing costs of the hedge trades it needs to execute.

If the bank was to hedge each single client trade independently, something known as ``matched-book'' approach, it would then incur high costs due to the need - for each trade - to either (1) raise cash to allow a long trade, or (2) borrow the assets necessary for the short trade.

A better alternative for the bank is to utilize the cash/asset reserves on its ``balance sheet'' (See an example below).
A broker-dealer bank generally holds a balance sheet composed of a large number of positions of different types belonging on a large asset universe. For instance, some of them might be hedges for derivatives contracts, others might be inventory kept for market-making purposes while others are client positions that can be utilized (``re-hypothecated''). In particular, if the bank has a large enough balance sheet it will likely be able to match long and short positions on a given asset directly from its balance sheet, hence saving the corresponding financing or borrowing costs that would be present using a ``matched-book'' strategy. This process is known as ``internalization'' and is at the core of how a Prime desk functions.

We indicate the quantity of a position of type $p$ on a given asset at time $t$ as $x_{p}(t)$\footnote{The definition of ``position type'' is loose, it can refer to the individual  executed trades as well to a coarser categorization. For instance, in the example below, the position types used, Client / Delta One / Exotics, are representative of trading desks.}. A positive position corresponds to long trades (the bank bought the asset) while negative positions correspond to short trades (the bank borrowed then sold the asset). The net/internalized position is then given by the sum over all position types:
\begin{eqnarray}
x(t) &=& \sum_{p} x_{p}(t)
\end{eqnarray}

For instance, the table below represents a hypothetical balance sheet for three assets $A$, $B$ and $C$:\\

\begin{tabular}{llrr}
\toprule
 Asset   & Type   &   Long Quantity &   Short Quantity \\
\midrule
 A       & Client           &             100 &                0 \\
 A       & Delta One   &               0 &             -100 \\
 B       & Client  &             100 &                0 \\
 B       & Delta One   &              50 &              -50 \\
 B       & Exotics     &              20 &              -10 \\
 C       & Client   &               0 &              -30 \\
\bottomrule
\end{tabular}\\\\

After internalizing the positions, the net position vector $x(t)$ would then be:\\

\begin{tabular}{lr}
\toprule
 Asset   &   Net Quantity \\
\midrule
 A       &              0 \\
 B       &            110 \\
 C       &            -30 \\
\bottomrule
\end{tabular}\\\\

In this example, asset $A$ is perfectly internalized having a net position equal to zero. In such a case the bank would incur no funding or borrowing cost as there are no hedge trades to be kept on balance sheet. The net positions for assets $B$ and $C$, instead, would be positive/negative. This means that the bank would incur funding/borrowing costs, respectively.

Whenever the net position is different from zero, $x(t) \ne 0$, the bank is hit by an inventory cost which accrues daily.
 When $x(t) > 0$ corresponding to long net positions, the bank needs to raise cash to cover the purchase of the shares of the asset that have not been internalized. We indicate, again, the interest paid on the borrowed cash with the funding rate $r_F(t)$. On the other hand, for short net positions which correspond to  $x(t) < 0$, the bank needs to borrow the (non internalized) asset shares at a cost measured by the borrowing rate $r_B(t)$. 

If the internalization was perfect we would have $x(t) = 0$ for all assets on balance sheet. This is an ideal situation because the bank would run its activities with no financing or borrowing costs whatsoever. In practice, perfect internalization is never achievable and hedge trades need to be executed. Referring to Case A in Figure \ref{FigHedging}, such hedges are generally put in place with another bank and are expensive because of the need to pay fees to the counterparty. Hence it makes sense for the bank to publish (via email or other communication channels) a list of assets / quantities on which it is willing to trade at a discount for the sole purpose of achieving better internalization. Such a list is generally called an ``axe'' and we indicate it with $a_{PUB}(t)$. If the bank can find a matching client who is willing to sell the same amount of shares (Case B in Figure  \ref{FigHedging}), it can replace the expensive hedge with a client trade where no fees are paid, thus saving costs. An axe trade is beneficial for the client as well, because of the lower cost charged by the bank. Banks, therefore, put efforts in locating such client matches.

\begin{figure}
  \centering
  \includegraphics[width=0.35\textwidth]{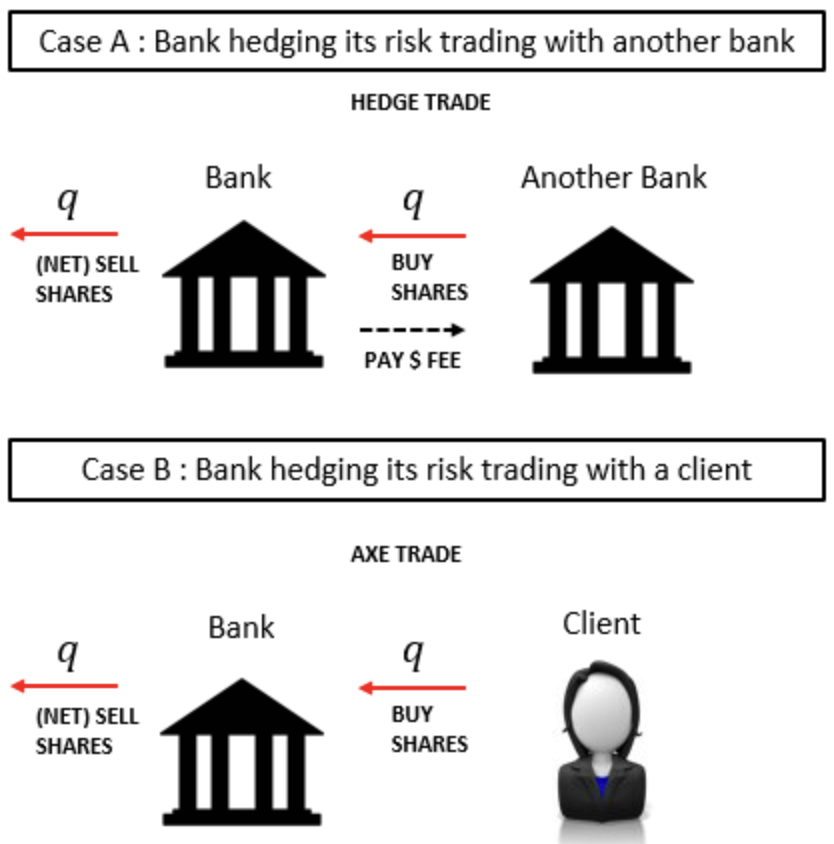}
  \caption{Illustration of the difference between a hedge traded with another bank (Case A) and an axe traded with a client (Case B). The arrows indicate the direction in which the asset (red) or cash fee moves (black) for instance, when the bank is selling an asset, the corresponding arrow points out of the bank. The bank is a net seller of $q$ shares of an asset (red arrows on the left) and to neutralize its risk it has to buy the same amount of shares (red arrows on the right). In Case A the bank transacts with another bank, which will charge fees for the service provided (black arrow). In Case B, instead, the bank managed to find a client willing to enter into an axe trade and the hedge will be costless (no fees).}\label{FigHedging}
\end{figure}

When the published axe is not obfuscated, it corresponds to the opposite of the true balance sheet positions and is given by $a_{TRUE}(t) = - x(t)$\footnote{The minus follows the convention according to which a long  published axe (positive quantity) corresponds to a short (negative) net balance sheet position, and vice versa.}.
However, this is unsatisfactory:  (1) The bank reveals its aggregate inventory, signaling whether it is a net buyer or seller of the stock. Clearly, the bank would like to prevent this kind of leakage; (2) Releasing aggregate/summed information about clients in the list may seem harmless. However, such statistical data can expose sensitive information about a ``concentrated'' client, i.e. one which holds a large portion of the assets in the bank's balance sheet. The trading activity of a concentrated client can be revealed to other clients even though the list is anonymized and includes aggregated information of all clients. In particular, continually updating the published statistics over time can give even more leverage to the adversary and result in more privacy concerns. For example, when a concentrated client executes a large trade, such a move will be reflected in the published axe and reveal a trading strategy that should instead be kept secret.

When a client receives a published axe $a_{PUB}(t)$ from the bank, it can opt to execute a quantity $a_{HIT}(t)$ of it. If the published axe is positive, the client can opt to enter long / buy trades ($a_{HIT}(t) \ge 0$). Otherwise, if the published axe is negative the client can sell the asset short ($a_{HIT}(t) \le 0$). Also, the client cannot trade more than the communicated axe quantity. Such constraints can be expressed as follows:
\begin{eqnarray}
\sgn(a_{HIT}(t)) &=& \sgn(a_{PUB}(t)) \\
|a_{HIT}(t)| &\le&|a_{PUB}(t)| \notag
\end{eqnarray}

When an axe trade happens, the corresponding net positions on the bank's balance sheet change from $x(t)$ to $x(t) + a_{HIT}(t)$ and a P\&L profit is realized for the bank (See Section \ref{SecPLMetric} for details on the P\&L).

As discussed in this paper, the problem with publishing the true axe $a_{TRUE}(t)$ is that its dynamics track the bank's balance sheet's and, worse, the trading activity of large concentrated clients (whose positions drive the bank's balance sheet for specific assets). This can leak valuable information to external observers, namely all the clients receiving the axe list, and can have serious consequences from a risk or reputational perspective.

\begin{algorithm*}[ht!]
\KwIn{Stream $a_{TRUE} \in \mathbb{Z}$, privacy budget $\epsilon$ and time upper bound $T$}
\KwOut{At each time step $t$, output $a_{PUB}(t)$}

{\bf Initialization:} For all $t\in[T]$, compute two streams $\sigma^{+}(t)$ and $\sigma^{-}(t)$ representing the true axe differences: 
\begin{eqnarray*}
\sigma(t) &=& a_{TRUE}(t) - a_{TRUE}(t-1)\\
\sigma^{+}(t) &=& \sigma(t) \textrm{ if } \sigma(t) >0  \textrm{ else } \sigma^{+}(t)= 0 \\
\sigma^{-}(t) &=& \sigma(t) \textrm{ if } \sigma(t) <0  \textrm{ else } \sigma^{-}(t)=0
\end{eqnarray*}

such that:
\begin{eqnarray*}
a_{TRUE}(t) &=& a_{TRUE}(0) + \sum_{i=1}^t \sigma^{+}(i) + \sum_{i=1}^t \sigma^{-}(i)
\end{eqnarray*}

{\bf Obfuscation at time step $t\in T$ :} Split the time-grid $T$ into buckets, each long $B$ days such that $t = d(t)B + c(t)$ where $c(t) = t \Mod{B}$ and $d(t) = \frac{t - c(t)}{B}$.

If $c(t)\neq 0$ perturb each value as follows:
\begin{eqnarray*}
\alpha^{+}(t) &=& \sigma^{+}(t) + \theta^{+}\big(t-c(t),t\big) \\
\alpha^{-}(t) &=& \sigma^{-}(t) + \theta^{-}\big(t-c(t),t\big) 
\end{eqnarray*}
where: 
\begin{eqnarray*}
\theta^{+}\big(p(t), t\big) &\sim& \textrm{Lap} \left( \frac{\Big|\max_{i\in [p(t),t]} - \min_{i\in [p(t),t]} \Big|}{\epsilon}\right)\\
\theta^{-}\big(p(t), t\big) &\sim& \textrm{Lap} \left( \frac{\Big|\max_{i\in [p(t),t]}- \min_{i\in [p(t),t]} \Big|}{\epsilon}\right)
\end{eqnarray*}

For the case where $c(t)=0$
\begin{eqnarray*}
\beta^{+}\big(d(t)\big) &=&  \sum_{i=d(t)\cdot B}^t \sigma^{+}(i) + \theta^{+}\big((d(t)-1)B,t\big) \\
\beta^{-}\big(d(t)\big) &=&  \sum_{i=d(t)\cdot B}^t \sigma^{-}(i) + \theta^{-}\big((d(t)-1)B,t\big) 
\end{eqnarray*}

Compute the published axe as follows: 

\begin{eqnarray*}
a_{PUB}(t) &=& a_{TRUE}(0) + \sum_{i=1}^{d(t)} \big(\beta^{+}(i)+\beta^{-}(i)\big) + \sum_{i=d(t)B+1}^t \big(\alpha^{+}(i)+ \alpha^{-}(i)\big) \\
&=& a_{TRUE}(t)  + \sum_{i=1}^{d(t)} \Big(\theta^{+}\big((i-1)B,i\cdot B\big)
+\theta^{-}\big((i-1)B,i\cdot B\big)\Big) + \sum_{i=d(t)B+1}^t \Big(\theta^{+}\big(d(t)B,i\big)+ \theta^{-}\big(d(t)B,i\big)\Big) \\
\end{eqnarray*}

    \caption{\label{Algorithm1}}

\end{algorithm*}

\section{Axe Simulation}\label{SectionAxeSimulation}

To analyze the quality of various axe obfuscation strategies (both in terms of P\&L and information leakage), we have implemented a monte carlo axe simulation engine whose
logic  mimics the following sequence of trading actions executed every day by the bank's Prime desk (see Fig. \ref{AxeSimulation}):
\begin{itemize}
\item[-] The true (Pre) axe is calculated at start of day.
\item[-] An obfuscation strategy is applied to the true axe, producing a published axe that is presented to clients.
\item[-] Some of the published axe offers are accepted by clients, who trade with the Bank hence changing the true axe as well as the inventory P\&L.
\end{itemize}

The approach we have taken is to use our historical axe data as the backbone of our simulations and then keep into account the effect of simulated axe trades. 
To do that, we introduce the following axe quantities:
\begin{itemize}
\item[-] $a_{HIST}(t)$:  Historical axe, not simulated
\item[-] $a_{PRE}(t)$:  True axe at the start the trading day, keeps into account both the historical axe as well as axe trades generated in the previous simulation steps
\item[-] $a_{POST}(t)$:  True axe at the end of the trading day, keeps into the axe traded generated at the current simulation step
\item[-] $a_{PUB}(t)$:  Published axe
\item[-] $a_{HIT}(t)$:  Axe trades executed by clients
\end{itemize}

The monte carlo stepping is then implemented as follows:
\begin{eqnarray*}
a_{PRE}(t) &=& a_{HIST}(t) - \sum_{t' \in [t-H, t)} a_{HIT}(t') \\ \notag
a_{PUB}(t) &=& F\Big(a_{PRE}(t) \Big)\\ \notag
a_{HIT}(t) &=&  h \ a_{PUB}(t) \\ \notag
a_{POST}(t) &=& a_{HIST}(t) - \sum_{t' \in [t-H, t]} a_{HIT}(t')   \\ \notag
&=&a_{PRE}(t) - a_{HIT}(t)\\ \notag
\end{eqnarray*}

where:
\begin{itemize}
\item[-] $h$ is the \emph{axe hit ratio}, the constant proportion of executed axe trades versus published axe. In our simulations, we have used a hitting ratio $h\simeq 5\% / 10\%$ which is consistent with historical data.
\item[-] $H$ is the \emph{axe holding period}, indicating the duration of the axe trades. We use two weeks as the holding period.
\item[-] $F()$ is the chosen obfuscation strategy
\end{itemize}

\begin{figure}
  \centering
  \includegraphics[width=0.45\textwidth]{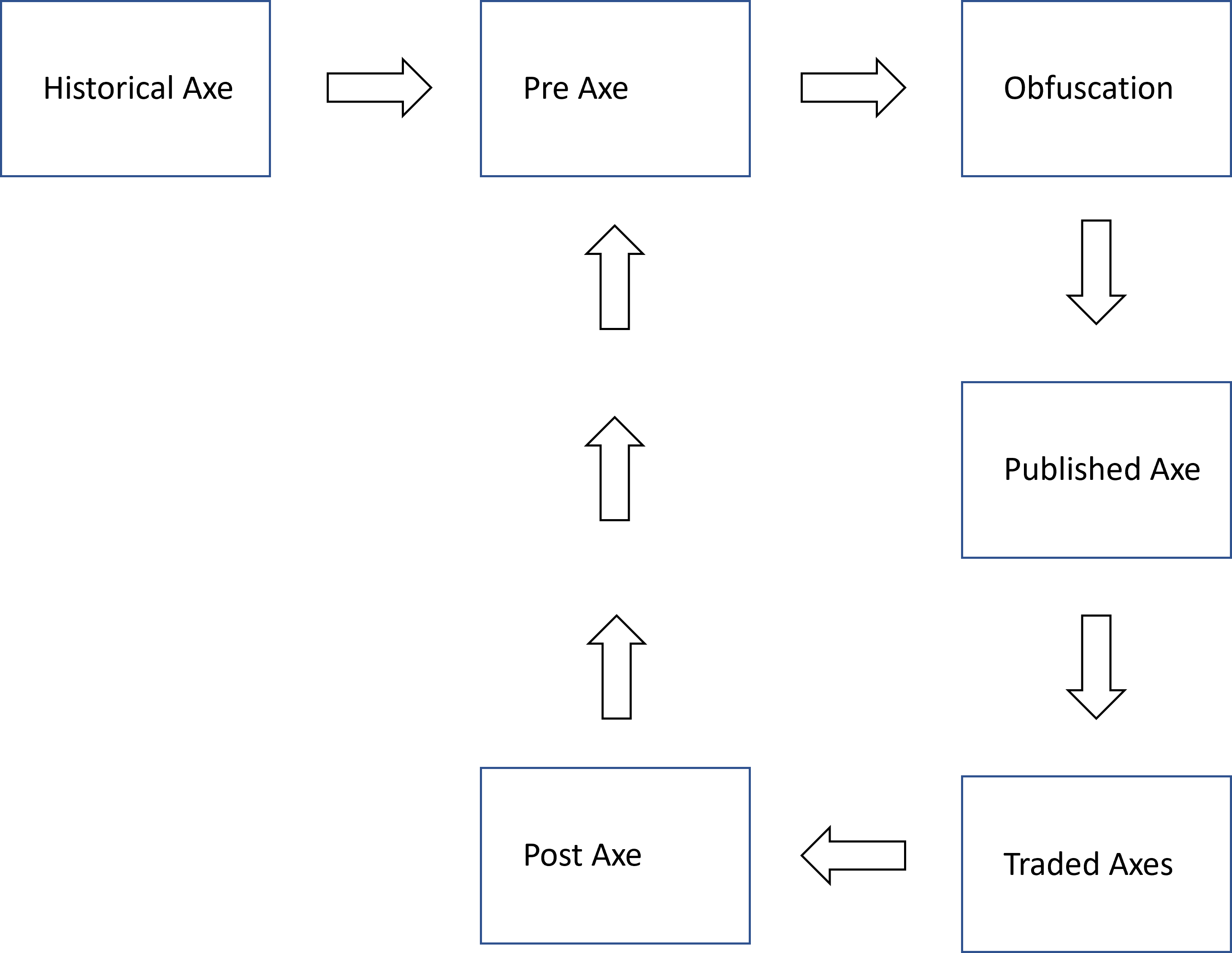}
  \caption{Axe Simulation Steps}\label{AxeSimulation}
\end{figure}

Using such monte carlo engine the authors have run several experiments and measured the effect of different DP parameters on metrics of interest, see Sections \ref{SecObfuscationMetrics} and \ref{sec:exp} of the main body of the paper.

\subsection{Over-Axe Frequency and Worst Case Cost}
We define the Over-Axe Quantity as the following Hinge loss:
\begin{eqnarray}
Q^{OA}(t) = \begin{cases}
        \max\Big(0, a_{PUB}(t) - a_{TRUE}(t)\\  \hspace{1cm} \Big[1 + \frac{r_B(t)}{r_F(t)}\Big] \Big) + 
       \max\Big(0, -a_{PUB}(t)\Big) \\\hspace{1cm} \textrm{if } a_{TRUE}(t) \ge 0  \\
        \max\Big(0, -a_{PUB}(t) + a_{TRUE}(t)\\\hspace{1cm} \Big[1 + \frac{r_F(t)}{r_B(t)}\Big] \Big)+ \max\Big(0, a_{PUB}(t)\Big) \\\hspace{1cm} \textrm{if } a_{TRUE}(t) < 0  
    \end{cases}
\end{eqnarray}
It measures how far the published axe is from the P\&L bounds given in Eq. \ref{PLBounds} (of the main body of the paper) and it is directly related to P\&L losses due to over-axing.
The Over-Axe Frequency is then given by:
\begin{eqnarray}
F^{OA}(t) &=& Pb\Big[Q^{OA}(t) > 0\Big]
\end{eqnarray}
It measures how often the published axe, if fully accepted by clients, would cause a negative inventory P\&L / loss for the bank.
As metric for the losses due to over-axing we use the Over-Axe Worst Case Cost, which is the cost incurred if the axe is fully hit every time the Over-Axe Quantity is different from zero and never hit otherwise. 
\begin{eqnarray}
C^{OA}(t) = E\bigg[ \begin{cases}
        Q^{OA}(t) r_B(t) & \textrm{if } a_{TRUE}(t) \ge 0  \\
        Q^{OA}(t) r_F(t) & \textrm{if } a_{TRUE}(t) < 0  
    \end{cases}\Bigg]
\end{eqnarray}
This is a very conservative estimator, as it corresponds to the worse-case scenario in which clients only accepts axes that cause a P\&L loss and never those producing a gain. Please notice, in this sense, that clients very rarely trade on the full published axe, see the discussion on the ``axe hit ratio'' in Section~\ref{SectionAxeSimulation}.

\section{Experimental Results Cont.}

Using the same model parameters, Fig. \ref{HistScenario} illustrates four simulated scenarios for the obfuscated  published axe for a given asset, together with the historical data for the bank's true axe  and the positions of a highly concentrated client. It can be noticed how the differential private published axe randomization makes any short-term inference, on whether the bank's true balance sheet positions or the client positions are increasing or decreasing, very challenging. At the same time the published axe follows, on average, the dynamics of the true axe. For such a reason, any axe trade executed in those scenarios would have produced a P\&L gain for the bank.

We eventually double-checked the production model parameters by analyzing the expected Over Axe Frequency and Worst Case Cost. Fig~\ref{OverAxeFreq} and \ref{OverAxeCost} report such metrics for the production model parameters. With the chosen parameters, the trading desk was happy to bear the risk of an Over Axe Frequency of $\sim$ 3.5\% and Over Axe Worst Case Cost of $\sim$ 6 \$  per day for each asset. Regarding this point, please also notice that over-axing is in general not a problem because (1) our algorithm very rarely
publishes axe quantities beyond the limits given by Eq.\ref{PLBounds} and (2) clients rarely accept the full axe but only a small proportion of it\footnote{As discussed in section \ref{SectionAxeSimulation} the historical axe hitting ratio is $\simeq 5 \%$ i.e. very low in practice.}.

\begin{figure*}
  \centering
  \includegraphics[width=0.9\textwidth]{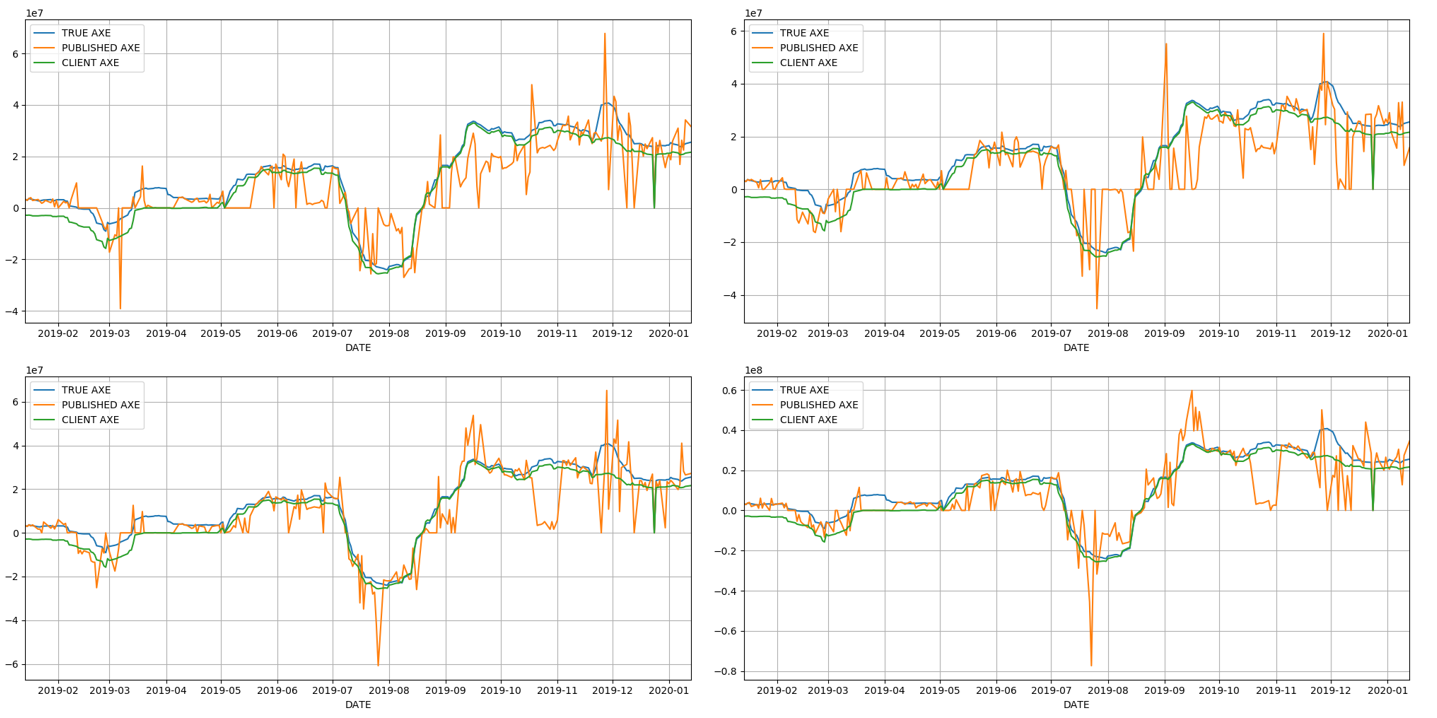}
  \caption{Four simulated scenarios for the obfuscated published axe (in orange color) for a given asset, together with the historical data for the bank's true axe (in blue color) and the positions of a highly concentrated client (in green color). The Y-axis refers to the axe quantity while the X-axis the observation date.}\label{HistScenario}
\end{figure*}

\begin{figure*}[!htb]
\minipage{0.47\textwidth}
  \includegraphics[width=\linewidth]{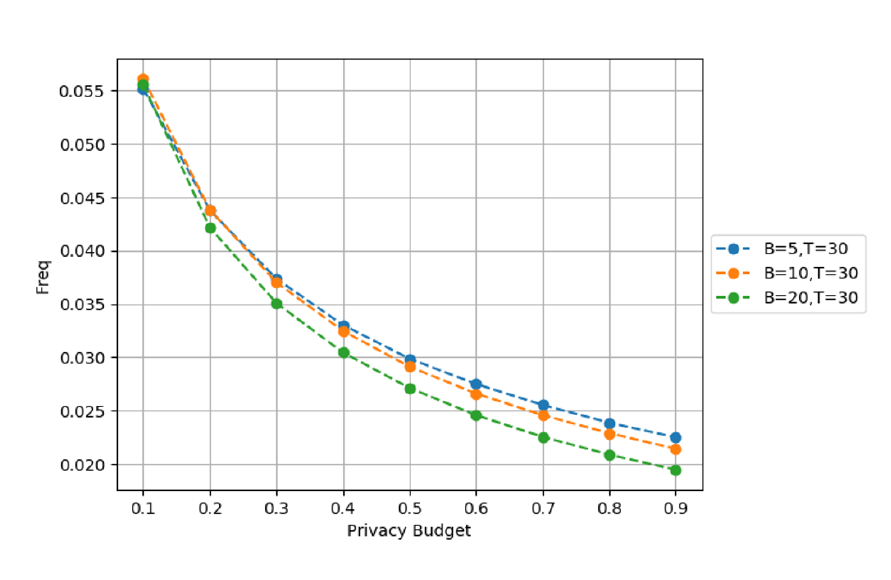}
  \caption{Expected Over Axe Frequency (Y-axis) calculated with monte carlo runs for different privacy budgets $\epsilon$ (X-axis) and obfuscation model parameters.}\label{OverAxeFreq}
\endminipage\hfill
\minipage{0.47\textwidth}
  \includegraphics[width=\linewidth]{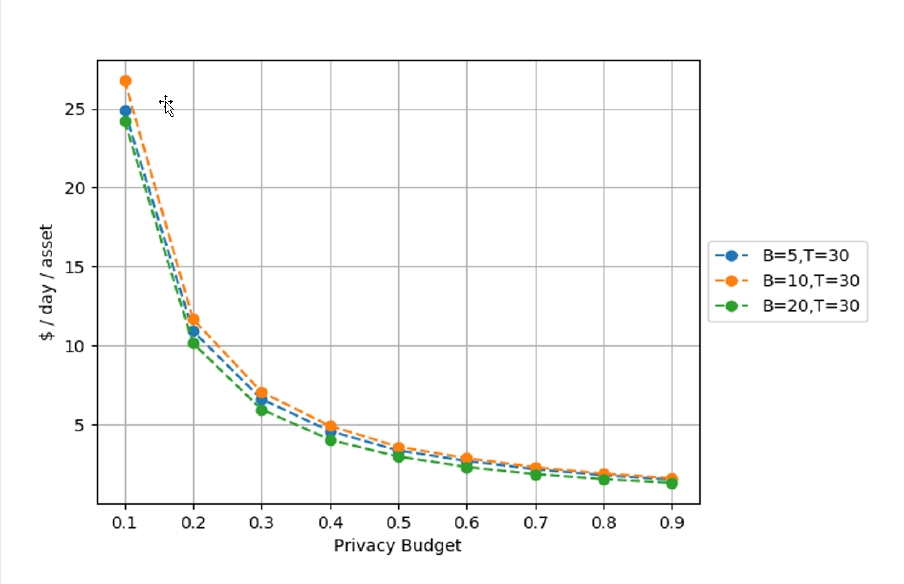}
  \caption{Expected Worst Case Over Axe Cost (Y-axis) calculated with monte carlo runs for different privacy budgets $\epsilon$ (X-axis) and obfuscation model parameters.}\label{OverAxeCost}
\endminipage\hfill
\end{figure*}

\begin{figure*}[!htb]
\minipage{0.47\textwidth}
  \includegraphics[width=\linewidth]{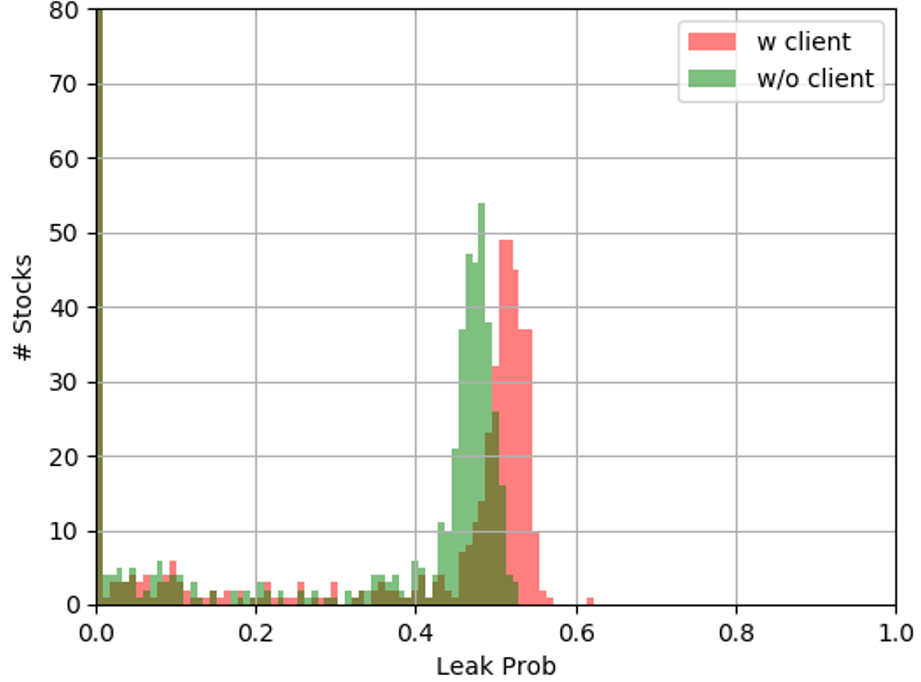}
  \caption{Histogram of simulated Leak Probabilities with 2-weeks lag (Y-axis), sampled with a concentrated client's positions included (red color) or excluded (green color) from the calculation of the published axe. The X-axis refers to the Leak Probability and the used Privacy Budget is $\epsilon = 0.9$. }\label{LeakProbHistComp09}
\endminipage\hfill
\minipage{0.47\textwidth}
  \includegraphics[width=\linewidth]{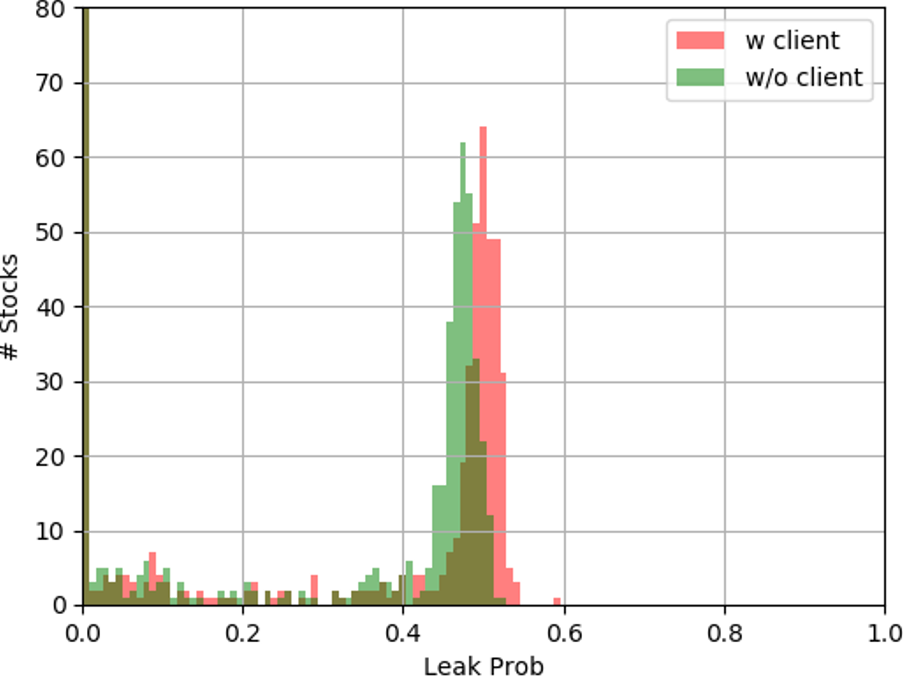}
  \caption{Histogram of simulated Leak Probabilities with 2-weeks lag (Y-axis), sampled with a concentrated client's positions included (red color) or excluded (green color) from the calculation of the published axe. The X-axis refers to the Leak Probability and the used Privacy Budget is $\epsilon = 0.5$. }\label{LeakProbHistComp05}
\endminipage\hfill
\end{figure*}

\begin{figure*}[!htb]
\minipage{0.47\textwidth}
  \includegraphics[width=\linewidth]{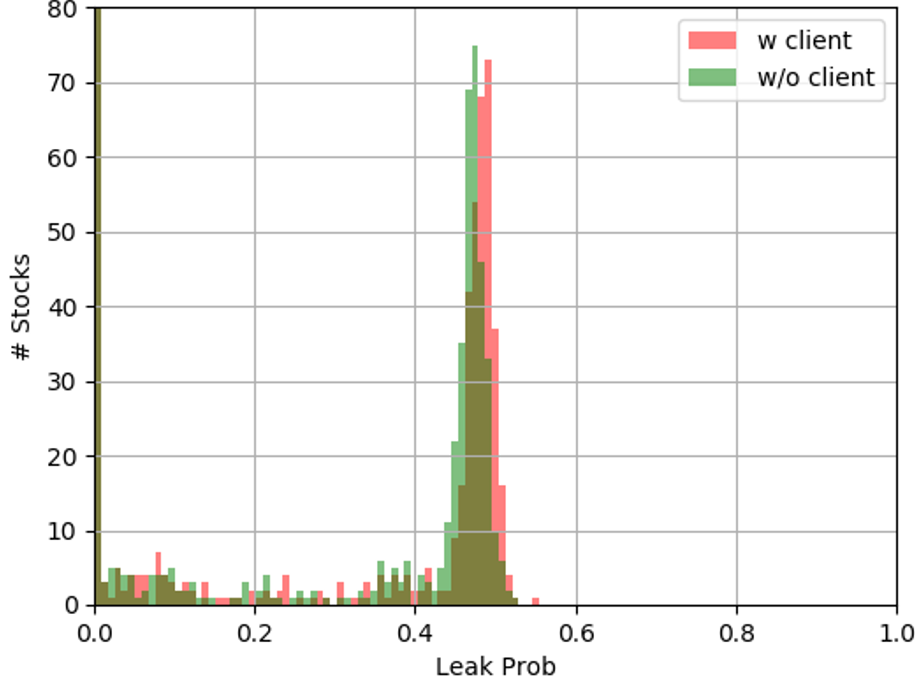}
  \caption{Histogram of simulated Leak Probabilities with 2-weeks lag (Y-axis), sampled with a concentrated client's positions included (red color) or excluded (green color) from the calculation of the published axe. The X-axis refers to the Leak Probability and the used Privacy Budget is $\epsilon = 0.1$. Compared to the results in Figure \ref{LeakProbHistComp09}, the lower $\epsilon$ make the two distribution closer.}\label{LeakProbHistComp01}
\endminipage\hfill
\end{figure*}

Using a different asset inventory, Figures \ref{LeakProbHistComp09}, \ref{LeakProbHistComp05} and \ref{LeakProbHistComp01} show the histograms of simulated Leak Probabilities (2-weeks lag), obtained using the full balance sheet data (red color) or discarding the positions belonging to a concentrated client (green color), for three different values of the Privacy Budget ($\epsilon = 0.9$ / $\epsilon = 0.5$ / $\epsilon = 0.1$, respectively). It can be noted how higher obfuscation make the two Leak Probability histograms closer.

Fig. \ref{SynthScenario} describes a simple experiment in which we generated a synthetic scenario in which a single client (1) contributes to the full axe for a given name and (2) increases his position gradually with a 10-fold increment in a period of two months. We report six random scenarios for the obfuscated axe, calculated with production DP parameters (See below for how they were determined), to give the reader a feeling of how obfuscation is affecting the published axe in such a simplified setting. We can note how the noise injected in the published axe is enough to make difficult for an external attacker to correctly guess the direction of change of the true balance sheet positions, at least over short time horizons. Over longer periods, the published axe follows the true axe.

\begin{figure*}
  \centering
  \includegraphics[width=0.9\textwidth]{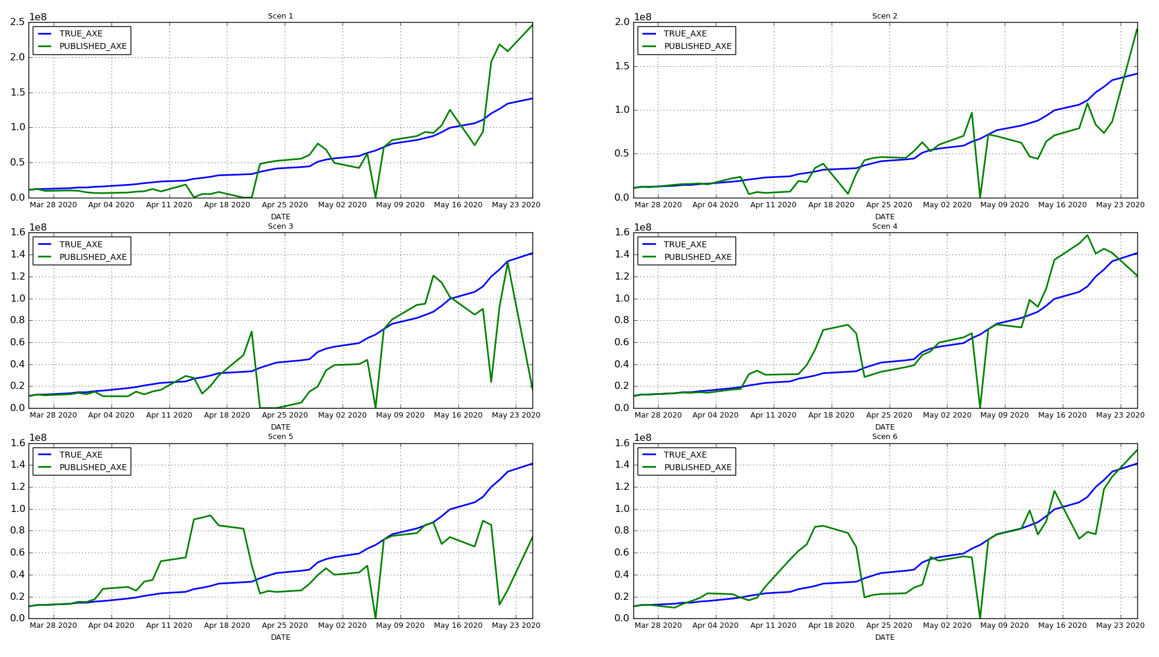}
  \caption{Monte carlo obfuscation scenarios generated with synthetic balance sheet data. We assume that a single client contributes to the bank's positions on a given asset, steadily increasing the balance sheet quantity over a period of two months. 
  The Y-axis refers to the axe quantity while the X-axis the observation date.
  The plots report six different randomized scenarios for the published axe (in green color) as well as the true axe (in blue color).}\label{SynthScenario}
\end{figure*}

\begin{figure*}[!htb]
\minipage{0.47\textwidth}
  \includegraphics[width=\linewidth]{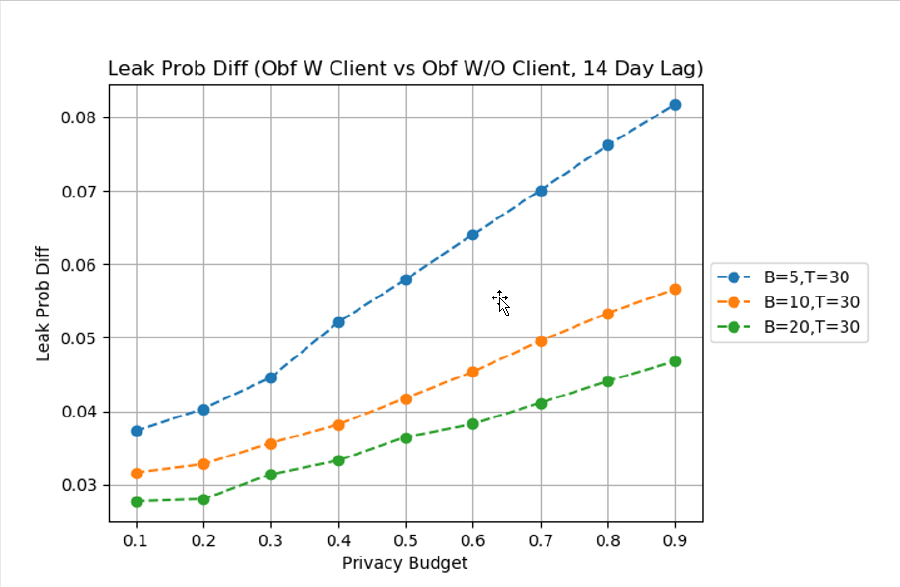}
  \caption{Expected Leak Probability difference (Y-axis) between publishing the
DP-obfuscated axe including the most concentrated client versus those excluding it, calculated for different privacy budgets $\epsilon$ (X-axis) and with a lag of 2 weeks.}\label{LeakProb-2w}
\endminipage\hfill
\end{figure*}

\begin{algorithm*}[ht!] 
\KwIn{Stream $a_{TRUE} \in \mathbb{Z}$, privacy budget $\epsilon$ and time upper bound $T$}
\KwOut{At each time step $t$, output $a_{PUB}(t)$}

{\bf Initialization:} For all $t\in[T]$, compute two streams $\sigma^{+}(t)$ and $\sigma^{-}(t)$ representing the true axe differences: 
\begin{eqnarray*}
\sigma(t) &=& a_{TRUE}(t) - a_{TRUE}(t-1)\\
\sigma^{+}(t) &=& \sigma(t) \textrm{ if } \sigma(t) >0  \textrm{ else } \sigma^{+}(t)= 0 \\
\sigma^{-}(t) &=& \sigma(t) \textrm{ if } \sigma(t) <0  \textrm{ else } \sigma^{-}(t)=0
\end{eqnarray*}

such that:
\begin{eqnarray*}
a_{TRUE}(t) &=& a_{TRUE}(0) + \sum_{i=1}^t \sigma^{+}(i) + \sum_{i=1}^t \sigma^{-}(i)
\end{eqnarray*}

{\bf Obfuscation at time step $t\in T$ :} Divide the time-grid $T$ into $\log T$ intervals and initialize $\rho_1,...,\rho_{\log T}$ to zero. Take the binary representation of $t$, denoted as $(t)_2$ and let $(t_i)_2$ denote the $i$-th bit of $(t)_2$.

Let $i$ be the least significant bit of $(t)_2$ for which $(t_i)_2=1$ (i.e., $i=LSB\{j:(t_j)_2\neq 0\}$) then

\begin{eqnarray*}
\rho^{+}_i &=& \sum_{j<i}\rho^{+}_j + \sigma^{+}(t) \\
\rho^{-}_i &=& \sum_{j<i}\rho^{-}_j + \sigma^{-}(t)
\end{eqnarray*}

Perturb each value as follows:

\begin{eqnarray*}
\alpha^{+}_i &=& \rho^{+}_i + \theta^{+}\big(t-\log T,t\big) \\
\alpha^{-}_i &=& \rho^{-}_i + \theta^{-}\big(t-\log T,t\big) 
\end{eqnarray*}

where: 
\begin{eqnarray*}
\theta^{+}\big(p(t), t\big) &\sim& \textrm{Lap} \left( \frac{\log T\Big|\max_{i\in [p(t),t]} - \min_{i\in [p(t),t]}\Big|}{\epsilon}\right)\\
\theta^{-}\big(p(t), t\big) &\sim& \textrm{Lap} \left( \frac{\log T\Big|\max_{i\in [p(t),t]} - \min_{i\in [p(t),t]}\Big|}{\epsilon}\right)
\end{eqnarray*}

Compute the published axe as follows: 

\begin{eqnarray*}
a_{PUB}(t) &=& a_{TRUE}(0) + \sum_{j:(t_j)_2= 1} \big(\alpha^{+}(j)+ \alpha^{-}(j)\big) \\
&=& a_{TRUE}(t) + \sum_{j:(t_j)_2= 1} \big(\theta^{+}(j)+ \theta^{-}(j)\big) \\
\end{eqnarray*}

    \caption{\label{Algorithm2}}

\end{algorithm*}

\end{document}